\theoremstyle{plain}
\newtheorem{theorem}{Theorem}
\newtheorem{proposition}[theorem]{Proposition}
\theoremstyle{remark}
\newtheorem{remark}{Remark}
\theoremstyle{definition}
\newtheorem{definition}{Definition}
\newcommand{\Tr}{\operatorname{Tr}}
\newcommand{\rmd}{\mathrm{d}}
\newcommand{\rmi}{\mathrm{i}}
\newcommand{\rme}{\mathrm{e}}
\newcommand{\bbone}{\mathds1}
\newcommand{\norm}[1]{\left\Vert#1\right\Vert}
\newcommand{\abs}[1]{\left\vert#1\right\vert}
\DeclareMathAlphabet{\mathpzc}{OT1}{pzc}{m}{it}
\newcommand{\Rbb}{\mathbb{R}}
\newcommand{\Cbb}{\mathbb{C}}
\newcommand{\Pbb}{\mathbb{P}}
\newcommand{\Qbb}{\mathbb{Q}}
\newcommand{\Ebb}{\operatorname{\mathbb{E}}}
\newcommand{\Gcal}{\mathcal{G}}
\newcommand{\Ical}{\mathcal{I}}
\newcommand{\Jcal}{\mathcal{J}}
\newcommand{\Lcal}{\mathcal{L}}
 \newcommand{\Wcal}{\mathcal{W}}
\newcommand{\Bscr}{\mathscr{B}}
\newcommand{\Fscr}{\mathscr{F}}
\newcommand{\Gscr}{\mathscr{G}}
\newcommand{\Hscr}{\mathscr{H}}
\newcommand{\Kscr}{\mathscr{K}}
\newcommand{\Sscr}{\mathscr{S}}
\newcommand{\Cov}{\operatorname{Cov}}
\newcommand{\Var}{\operatorname{Var}}
\newcommand{\RE}{\operatorname{Re}}
\newcommand{\IM}{\operatorname{Im}}
\begin{document}
\title{Quantum continuous measurements: The stochastic Schr\"odinger equations and
the spectrum of the output}
%\shorttitle{Stochastic Schr\"odinger equations and output spectrum}

%% Place for inserting article cathegory: Research Article, Rapid Communication, Communication or Review Article
%\articletype{Review Article}

\author{Alberto Barchielli and   Matteo Gregoratti}

\affiliation{Politecnico di Milano, Department of Mathematics,
    Piazza Leonardo da Vinci 32, 20133 Milano, Italy}
   \altaffiliation[Also at ]{Istituto Nazionale di Fisica Nucleare, Sezione di Milano,
    and Istituto Nazionale di Alta Matematica, GNAMPA }

\begin{abstract}The stochastic Schr\"odinger equation, of classical or quantum type, allows to describe open quantum systems under measurement in continuous time. In this paper we review the link between these two descriptions and we study the properties of the output of the measurement. For simplicity we deal only with the diffusive case. Firstly, we discuss the quantum stochastic Schr\"odinger equation, which is based on quantum stochastic calculus, and we show how to transform it into the classical stochastic Schr\"odinger equation by diagonalization of suitable commuting quantum observables.
Then, we give the a posteriori state, the conditional system state at time $t$ given the output up to that time, and we link its evolution to the classical stochastic Schr\"odinger equation. Moreover, the relation with quantum filtering theory is shortly discussed. Finally,  we study the output of the continuous measurement, which is a stochastic process with probability distribution given by the rules of quantum mechanics. When the output process is stationary, at least in the long run, the
spectrum of the process can be introduced and its properties studied. In particular we
show how the Heisenberg uncertainty relations give rise to characteristic bounds on the
possible spectra and we discuss how this is related to the typical quantum phenomenon
of squeezing. We use a simple quantum system, a two-level atom stimulated by a laser, to discuss the differences between  homodyne and heterodyne detection and to explicitly show squeezing and anti-squeezing of the homodyne spectrum and the Mollow triplet in the fluorescence spectrum.\end{abstract}

%% Keywords should be separated by \*\ sign
\keywords{Quantum continuous measurements \*\  %Open quantum systems \*\
Quantum trajectories \*\ Homodyne detection \*\ Heterodyne detection \*\  %Autocorrelation function \*\
Spectrum of the squeezing \*\ Power spectrum \*\ Uncertainty relations in continuous measurements}

%\firstpage{1}
\maketitle

\section{Introduction}\label{S1}

A big achievement in the 70's-80's was to show that, inside the modern formulation of
quantum mechanics, based on \emph{positive operator valued measures} and
\emph{instruments} \cite{Kra80,Dav76}, a consistent theory of measurements
in continuous time (\emph{quantum continuous measurements}) was
possible \cite{Dav76,BarLP82,BarL85,Bar86PR,Bel88,BarB91,Hol01}.
Starting from the 80's, two other very flexible and
powerful formulations of continuous measurement theory were developed. The first one is often referred to as \emph{quantum trajectory theory} and it is based on the \emph{the stochastic Schr\"odinger equation} (SSE), a stochastic differential
equation of classical type (commuting noises, It\^o calculus) \cite{Bel88,BarB91,Car93,WisMil93,BarP96,WisM93,Wis96,Mil96,Car08,BarG09,WisM10}. The second formulation is based on quantum stochastic calculus \cite{HudP84,GarC85,Parthas92} and the \emph{quantum SSE} (non commuting noises, Bose fields, Hudson-Parthasarathy
equation) \cite{BarL85,Bar86PR,Bel88,BarP96,GarZ00,Hol01,Bar06,Car08,WisM10}.
The main applications of
quantum continuous measurements are in the photon detection theory in quantum optics
(\emph{direct, heterodyne, homodyne
detection}) \cite{Bar90QO,Car93,WisMil93,WisM93,BarP96,Wis96,Mil96,GarZ00,BarL00,BarP02,BarG09,Bar06,Car08,WisM10}. While the classical SSE gives a differential description of the joint evolution of the observed signal and of the measured system, in agreement with the axiomatic formulation of quantum mechanics, the quantum SSE gives a dilation of the measurement process, explicitly introducing an environment which interacts with the system and mediates the observations.

In this paper we start by giving a short presentation of continuous measurement theory based on the
quantum SSE and we show the equivalence between this approach and the one based on the classical SSE (Secs.\ \ref{sec:SSE} and \ref{sec:CM}). Then we consider the output of a continuous measurement and we develop the theory up to the introduction of its spectrum (Sec.\ \ref{sec:spectrum}), which enables the study of typical and significative applications, see Sec.\ \ref{sec:model}.

We consider only the type of observables relevant for the description of
homodyne/heterodyne detection and we make the mathematical simplification of introducing only bounded
operators on the Hilbert space of the quantum system of interest and a finite number of noises; for the case of unbounded operators see \cite{FagW03,Fag06,CastroB11}.

In Sec.\ \ref{sec:SSE}, first we discuss some typical approximations that give rise to a system/environment interaction described by a quantum SSE, then we give a mathematical meaning to such an equation by introducing the basic ingredients of quantum stochastic calculus on Fock space.

In Sec.\ \ref{sec:CM} we introduce the quantum observables which describe the continuous measurement and we show how to derive the classical SSE and the related stochastic master equation (SME). The key point in the step from the quantum SSE to the classical SSE is the introduction of an Hilbert space isomorphism which diagonalizes a suitable complete set of quantum observables.
We shortly illustrate also the connections of this approach to quantum filtering theory \cite{Bel88,BouVHJ07}.
The classical SSE and the SME give both the probability distribution for the observed output and the \emph{a posteriori state}, the conditional system state given a realization of the output. These equations are driven by classical noises, but, in spite of this, they are fully quantum as they are equivalent to the formulation of continuous measurements based on quantum fields. It is just in the formulation based on SSE and SME that the probabilistic structure of the output current becomes very transparent.

In Sec.\ \ref{sec:spectrum} we
introduce the spectrum of the classical stochastic process which represents the output and we
study the general properties of the spectra of such processes by proving
characteristic bounds due to the Heisenberg uncertainty principle. This bound is one of the evidences that the whole theory of continuous measurements is fully quantum, independently of the adopted formulation.

As an application, in Sec.\ \ref{sec:model} we present the
case of a two-level atom, which is measured in continuous time by detection of its fluorescence light. The spectral analysis of the output can reveal the phenomenon
of squeezing of the fluorescence light, a phenomenon related to the  uncertainty
relations. We use this example also to illustrate the differences between homodyning and heterodyning and between the \emph{spectrum of the squeezing} and the \emph{power spectrum}. Finally we show how Mollow triplet appears in the power spectrum in the case of an intense stimulating laser. Section \ref{sec:concl} contains our conclusions.

\section{The quantum stochastic Schr\"odinger equation}\label{sec:SSE}

We want to introduce the continuous measurement theory of a system $S$ at its higher level, that is by explicitly modelling also the quantum environment which mediates the observation: we observe an environment that is coupled with $S$, thus acquiring direct information on the environment and indirect information on $S$.

Thus we start from the quantum SSE which will be used to define the global evolution of $S$ and the environment. Even if not defined by a regular Schr\"odinger equation, such an evolution will be a proper Hamiltonian evolution, but of a very particular kind, with ``Markovian'' features for $S$. Indeed, the quantum SSE naturally emerges with the typical ``Markovian'' limits which allow to describe an open evolution of $S$ by a quantum dynamical semigroup or to describe a continuous measurement of $S$ by an \emph{instrumental process} \cite{Hol01}.

Therefore a quantum SSE defines an approximated model, which, nevertheless, is fully quantum, and for this reason it can describe very well typical quantum experiments, such as the quantum optical ones or in general quantum continuous measurements.

In this section, first we show some typical approximations which lead to a quantum SSE, then we introduce quantum stochastic calculus (QSC), the mathematical background which gives a meaning to that equation and allows to start with its study.

In the following we shall denote by $\Hscr$ the \emph{system space,} the complex separable Hilbert space associated to the
open quantum system $S$.

\subsection{The physical bases of the quantum SSE}\label{physapprox}
We start by presenting the typical physical approximations which are involved in the use of the quantum SSE \cite{Bar90QO,Bar91,GarC85,GarZ00}.

Our system
$S$ interacts with some quantum Bose fields $\hat b_k(\nu)$, satisfying the the canonical commutation relations (CCR)
\begin{equation}\label{CCRfrequency}
[\hat b_k(\nu),\hat b_i(\nu^{\prime})] = 0, \qquad [\hat b_k(\nu),\hat b_i^\dagger(\nu^{\prime})] =
\delta_{ki}\, \delta(\nu - \nu^{\prime}) .
\end{equation}
The parameter $\nu$ is the energy or the frequency (we are taking $\hbar=1$) of the free field, while $k$ is an additional discrete degree of freedom. These fields can represent, for instance, the electromagnetic field; in this case the index $k$ stays for polarization, (discretized) direction of propagation, and so on \cite{Bar97}.

A generic system--field interaction, linear in the field operators, can be written as
\begin{equation}\label{Hint}
H_{\mathrm {I}} = \sum_k \frac{\rmi}{ \sqrt {2\pi}}
\int_{\Omega_k-\theta_k}^{\Omega_k+\theta_k} \kappa_k(\nu) \bigl[R_k \hat b_k^{\dagger}(\nu) -
R_k^* \hat b_k(\nu)\bigr] \rmd \nu ,
\end{equation}
where the $R_k$ are system operators (acting on $\Hscr$) and the $\kappa_k(\nu)$ are real couplings.  In the optical case, typically the $R_k$ are dipole operators and the rotating-wave approximation is understood. The $\Omega_k$ are resonance frequencies of system $S$ and $2\theta_k$ is the interaction bandwidth.

By working with the Heisenberg equations of motion for system operators, Gardiner and Collet \cite{GarC85} discussed the approximations needed to pass
from the quasi--physical Hamiltonian \eqref{Hint} to a Markovian quantum stochastic evolution. Here we present the same approximations by working with the global unitary evolution of the composed system ``$S$ plus fields'' \cite{Bar90QO}.

\subsubsection{The flat-spectrum approximation}

The first approximation is to take the couplings independent from $\nu$: the \emph{flat-spectrum approximation}. As a constant can always be included in $R_k$, we take $\kappa_k(\nu)=1$.
Then, we pass to \emph{the interaction picture with respect to the free dynamics of the fields}:
\begin{equation}\label{freefieldev}
\hat b_k(\nu) \mapsto \rme^{-\rmi \nu t} \hat  b_k(\nu).
\end{equation}
In this picture, the interaction Hamiltonian becomes
\begin{equation}\label{Hintint}
\tilde H_{\mathrm {I}}(t) = \rmi \sum_k \left[ R_k \tilde b_k^\dagger(t) - R_k^\dagger \tilde b_k(t)
\right] ,
\end{equation}
\begin{equation}\label{ajint}
\tilde b_k(t) := \frac{1}{ \sqrt{2\pi}} \int_{\Omega_k - \theta_k}^{\Omega_k + \theta_k}  \rme^{-\rmi \nu t}
\hat b_k(\nu)\, \rmd \nu .
\end{equation}
 By construction, the field operator $\tilde b_k(t)$ represents  a wave packet with some carrier frequency
$\Omega_k$ and bandwidth $2\theta_k$.

In the interaction picture, the time evolution operator $\tilde U_t$  can be written
as
\begin{equation}\label{Utilde}
\tilde U_t = \overleftarrow{\mathrm {T}} \, \exp \Bigl\{ -\rmi \int_0^t \left[ H_0 + H_{\mathrm{
I}}(s) \right] \rmd s \Bigr\} \,,
\end{equation}
where $H_0$ is the free  Hamiltonian of system $S$ and $\overleftarrow{\mathrm{ T}}$ is the usual time--ordering
prescription (chronological product).

\subsubsection{The  broad-band approximation}
Now, we take the \emph{broad-band approximation}:
$\theta_k \to +\infty$, $\forall k$.
Note that in this limit the energy of the free field becomes unbounded both from above and from below. This approximation is justified when only  energies not far from the resonance frequencies $\Omega_k$ are involved  in the physical process \cite[p.\ 149]{GarZ00}.

From  \eqref{CCRfrequency}  and \eqref{ajint} we obtain that the field operators $b_k(t) := \displaystyle \lim_{\theta_k\to +\infty}  \tilde b_k(t)$ are given by
\begin{equation}\label{ajint2}
b_k(t) = \frac{1}{ \sqrt{2\pi}} \int_{- \infty}^{+ \infty}  \rme^{-\rmi \nu t}
\hat b_k(\nu)\, \rmd \nu
\end{equation}
and satisfy the CCR
\begin{equation}\label{CCR}
\left[b_i(s),b_k^\dagger(t) \right]=\delta_{ik}\delta(t-s), \qquad \left[b_i(s),b_k(t) \right]=0.
\end{equation}
Then the free dynamics of the fields \eqref{freefieldev} gives
\begin{equation}\label{freefieldev2}
b_k(s) \mapsto b_k(s+t),
\end{equation}
so that the argument $t$ in the fields $b_k(t)$ has a double role: it is a field degree of freedom, the conjugate momentum of the free field energy $\nu$, because of \eqref{ajint2}, and it is the time, because $b_k(t)$ is the evolution of $b_k(0)$ at time $t$.
Let us note that Bose fields with delta-commutations in time were already found by Yuen and Shapiro \cite{YuenS78} in their study of the \emph{quasi-monochromatic paraxial approximation} of the electromagnetic field.

Let us take now the limit $\theta_k \to +\infty$, $\forall k$, in the Hamiltonian $\tilde H_{\mathrm {I}}(t)$ and in the evolution $\tilde U_t$. Formally
\[
\tilde H_{\mathrm {I}}(t)
\to \rmi \sum_k \left(R_k b_k^\dagger(t) - R_k^\dagger b_k(t) \right),
\]
which describes a singular interaction, but which is not a proper operator in a Hilbert space because the singular field operator $b_k^\dagger(t)$ is not integrated over $t$. Nevertheless, $\tilde U_t \to U_t$ where
\begin{equation}\label{U-Tprod}
U_t = \overleftarrow{\mathrm{ T}} \exp \biggl\{ \int_0^t \Bigl[-\rmi H_0 + \sum_k
\left(R_k b_k^\dagger(s) - R_k^\dagger b_k(s)\right) \Bigr]\rmd s \biggr\}
\end{equation}
which can be a proper unitary evolution in a Hilbert space thanks to QSC. Indeed, QSC is the mathematical theory that gives a meaning to It\^o-type integrals with respect to the non-commuting noises $\rmd B_k(t)=b_k(t) \rmd t$ and $\rmd B_k^\dagger(t)=b_k^\dagger(t) \rmd t$, that is with respect to \cite[Eq.\ (11.2.24)]{GarZ00}
\begin{equation}\label{Bb}
B_k(t)=\int_0^tb_k(s) \rmd s , \qquad B_k^\dagger(t)=\int_0^tb_k^\dagger(s) \rmd s.
\end{equation}

To find an equation for $U_t$ we can write
\begin{multline}\label{Ut+}
\rmd U_t=U_{t+\rmd t} - U_t \\
{}= \Bigl(\exp \Bigl\{-\rmi H_0\rmd t + \sum_k \left[R_k \rmd B_k^\dagger(t) - R_k^\dagger \rmd B_k(t)\right] \Bigr\} \\ {} - \bbone\Bigr) U_t\,,
\end{multline}
and then we can try a series expansion of the exponential. However, from  \eqref{CCR} and \eqref{Bb} we get
\[
\left[\rmd  B_k(t),\, \rmd B_i^\dagger(t)\right]=\delta_{ki}\, \rmd t.
\]
Due to this fact, in the second order term of the expansion of the exponential surely some of the contributions are of order $\rmd t$. This shows that new mathematical tools are needed to treat the singular interaction appearing in \eqref{U-Tprod}, QSC indeed. In addition, here another peculiarity arises: in the case of quantum fields there exist non unitarily equivalent representations of the CCR \eqref{CCR}. Moreover, which ones of the second order terms are indeed of order $\rmd t$ depends on the representation and this implies that the rules of QSC depend on the representation of the CCR \cite{HudL85}.

In this paper we consider only the representation of the CCR \eqref{CCR} on the Fock space, the one characterized by the existence of the vacuum state. Let us stress that representations not unitarily equivalent to the Fock one describe physically different situations, such as thermal and squeezed input fields \cite{GarC85,GarZ00}.

The final result of these approximations is the quantum SSE \eqref{HPequ}, as we shall see after Theorem \ref{th1}. Note that, however, such an equation is a general evolution model that emerges under many different Markovian limits, not only with the one we have just described. Of course, the system operators $H_0$ and $R_k$ in \eqref{U-Tprod} or \eqref{HPequ} are to be chosen just by looking at the physical context and the approximations that produce the Markovian regime. In the model of Sec.\ \ref{sec:model} we shall show how to represent various dissipative effects by a suitable choice of the operators $R_k$.

\subsection{Quantum stochastic calculus and unitary dynamics}

We introduce now QSC and the Hudson-Parthasarathy equation in the Fock representation. QSC \cite{HudP84} is based on the use of some Bose fields, satisfying the CCR with a Dirac delta in time \eqref{CCR}, that model the environment interacting with an initial system $S$ and play the role of non-commuting noises. By QSC one gives meaning to the quantum stochastic Schr\"odinger equation or
Hudson-Parthasarathy equation \cite{HudP84,Parthas92}. For a short review see \cite[Sec.\ 2]{Bar06} or \cite[Secs.\ 11.1, 11.2]{GarZ00}.
Our aim is to recall the main notions and to fix the notations, not to give a self-contained presentation, which can be found in \cite{Parthas92,Bar06}.

\subsubsection{Fock space}

Let $\Kscr$ be the separable Hilbert space of a bosonic particle and $\Kscr^{\otimes_s n}$  be the ``$n$-particle space'', that is the symmetric part of the tensor product $\Kscr\otimes \cdots \otimes \Kscr$, $n$ times. Then, the direct sum $\displaystyle\Gamma(\Kscr)= \Cbb \oplus \sum_{n=1}^{+\infty} \Kscr^{\otimes_s n}$ is the \emph{symmetric} (or bosonic) \emph{Fock space} over $\Kscr$.
In this context a \emph{coherent vector}
$e(f)$,  $f \in \Kscr$, is the vector in $\Gamma(\Kscr)$ given by
\begin{equation}\label{expV}
e(f) :=\rme^{
-\frac{1}{2}\,\norm{f}^2}\left(1,f,\frac{f\otimes f}{\sqrt{2!}},\ldots,\frac{f^{\otimes n}}{\sqrt{n!}}, \ldots \right).
\end{equation}
Note that $e(0)$ represents the vacuum state and that $\langle e(g)|e(f)\rangle = \exp\left\{ -\frac 1 2 \norm{f}^2 -\frac 1 2 \norm{g}^2 +\langle g|f\rangle\right\}$; in particular, the coherent vectors are normalized. Moreover, they are all linearly independent and their linear combinations are dense in $\Gamma(\Kscr)$. Then, an important property of the Fock spaces is that the action on the coherent vectors uniquely determines a densely defined  linear operator.
We are interested in  $\Kscr= L^2(\mathbb{R})\otimes\Cbb^d =
L^2({\mathbb R}; \Cbb^d)$ and we denote by $\Gamma\equiv \Gamma\big(L^2({\mathbb R}; \Cbb^d)\big)$ the symmetric Fock space over the one-particle space $L^2({\mathbb R}; \Cbb^d)$.

\subsubsection{Factorization properties of the Fock space}
A general property of symmetric Fock spaces is that, when the one-particle space is given by a direct sum ($\Kscr=\Kscr_1\oplus \Kscr_2$), then the factorization property $\Gamma(\Kscr_1\oplus \Kscr_2)= \Gamma(\Kscr_1)\otimes\Gamma( \Kscr_2)$ holds.

In our set up, for every time interval $A$, let us denote by $\Gamma[A]\equiv \Gamma\big(L^2(A; \Cbb^d)\big)$ the symmetric Fock space over $ L^2(A; \Cbb^d)$; in particular, we have $\Gamma=\Gamma[\Rbb]$. Then,
for any $s<t$, we have $L^2(\Rbb; \Cbb^d)= L^2\big((-\infty,s); \Cbb^d\big)\oplus L^2\big((s,t); \Cbb^d\big)\oplus L^2\big((t,+\infty); \Cbb^d\big)$ and
\begin{equation}\label{FockFactorization}
\Gamma[\mathbb{R}] = \Gamma\big[(-\infty,s)\big]\otimes\Gamma\big[(s,t)\big]\otimes\Gamma\big[(t,+\infty)\big].
\end{equation}

Moreover, each space $\Gamma[A]$ can be identified with a subspace of the full Fock space $\Gamma[\Rbb]$ by taking the tensor product of a generic vector in $\Gamma[A]$ with the vacuum of $\Gamma[\Rbb\setminus A]$. Then, for every $f\in L^2(\Rbb; \Cbb^d)$, we have the identification
\[
e(f|_A)\in \Gamma[A] \mapsto e(1_A f)\in \Gamma[\Rbb].
\]
We are denoting by $1_A(\cdot)$ the indicator function of the set $A$ and by $f|_A$ the restriction of the function $f$ to the set $A$. With an abuse of notation we write
\[
e(f)=e\big(1_{(-\infty,s)}f\big)\otimes e\big(1_{(s,t)}f\big)\otimes e\big(1_{(t,+\infty)}f\big).
\]
In particular, $e\big(1_{(s,t)}f\big)$ can represent a vector in $\Gamma[\Rbb]$ or in $\Gamma[(s,t)]$ and we have the identification $e\big(1_{(s,t)}f\big)= e(0) \otimes e\big(1_{(s,t)}f\big) \otimes e(0)$.

\subsubsection{Bose fields}

Let $\{z_k,\ k\geq 1\}$ be the canonical basis in $\Cbb^d$ and for any $f\in
L^2(\mathbb{R};\Cbb^d)$ let us set $f_k(t):= \langle z_k|f(t)\rangle_{\Cbb^d}$.

Then, we define two families of mutually adjoint operators, the \emph{annihilation} and \emph{creation processes}, by their actions on the coherent vectors:
\begin{gather*}
B_k(t)\,e(f)= \int_0^t f_k(s) \,\rmd  s \, e(f)\,, \\
\langle e(g)| B_k^\dagger(t)e(f)\rangle = \int_0^t \overline{ g_k(s)}\, \rmd  s \, \langle
e(g)| e(f)\rangle.
\end{gather*}
The overline denotes the complex conjugation.

For $t>0$, the annihilation and creation processes are \emph{adapted}, in the sense that they factorizes, with respect to \eqref{FockFactorization}, as
\begin{equation*}
B_k^{(\dagger)}(t)=\bbone_{(-\infty,0)}\otimes B_k^{(\dagger)}(t)\otimes\bbone_{(t,+\infty)},
\end{equation*}
and they satisfy the integrated form of the CCR, namely
\begin{gather}\label{B_CCR}
[B_k(t),B_l^\dagger(s)]=\delta_{kl}\; t\wedge s, \\ \notag
[B_k(t),B_l(s)]=0,
\qquad [B_k^\dagger(t),B_l^\dagger(s)]=0;
\end{gather}
$t\wedge s$ is the minimum between $t$ and $s$ and $\bbone_A$ is the identity operator on $\Gamma[A]$.

By introducing also the ``field densities\rq{}\rq{} $b_k(t)$ by
\begin{equation}\label{nihil}
b_k(t)\,e(f)= f_k(t)\, e(f) \qquad \forall f\in L^2({\mathbb R}; \Cbb^d),
\end{equation}
we get that all of them annihilate the vacuum and that, together their formal adjoints, they satisfy the CCR \eqref{CCR} and that the annihilation and creation processes are nothing but the integrals \eqref{Bb} of these densities.

\subsubsection{Temporal modes and Weyl operators.}\label{TmWo}

The free evolution in the Fock space is represented by the left shift in $\Gamma$
\begin{equation*}
\Theta_t\,e(f)=e(\theta_tf),\quad \big(\theta_tf\big)(s)=f(s+t).
\end{equation*}
Then, coherently with \eqref{freefieldev2}, the action of the shift on the fields is given by
\begin{equation}\label{EVt}
\Theta_t^\dagger\,b_k(s)\,\Theta_t=b_k(s+t),
\end{equation}
wich is the free evolution \eqref{freefieldev2}.

If we take a function $g\in L^2(\Rbb)$ we can define the annihilation operator
\begin{equation}\label{c_k}
c_k(g):= \int_{-\infty}^{+\infty} \overline{g(t)}\, b_k(t)\, \rmd t.
\end{equation}
By Eq.\ \eqref{nihil}, its action on the coherent vectors is given by
\[
c_k(g)\,e(f)= \int_{-\infty}^{+\infty}  \overline{g(t)}\,f_k(s) \,\rmd  s \, e(f)\equiv \langle g| f_k \rangle_{L^2(\Rbb)} \, e(f).
\]

If we take a complete orthonormal system $g^i$, $i=1,2,\ldots$, in $L^2(\Rbb)$, we can define the annihilation operators $c_k(g^i)$.
Together with their adjoint operators, they satisfy the usual CCR.
We can say that the upper index $i$ denotes the \emph{temporal modes}, while the lower index $k$ denotes the polarization/spatial modes.

An important technical tool is represented by the \emph{Weyl operators} $\Wcal(q)$, $q\in L^2({\mathbb R}; \Cbb^d)$, the unitary operators defined by: $ \forall f\in L^2({\mathbb R}; \Cbb^d)$,
\[
\Wcal(q)e(f)= \exp\left\{\rmi \IM \langle f|q\rangle_{L^2({\mathbb R}; \Cbb^d)}\right\} e(f+q);
\]
this is nothing but the \emph{displacement operator} for the field. By using the notation \eqref{c_k} we can write
\begin{equation}\label{Wcal}
\Wcal(q)= \exp\left\{\sum_k \left(c_k^\dagger(q_k) - \text{h.c.}\right)\right\},
\end{equation}
while, by using the discrete modes introduced above, we have
\[
\Wcal(q)= \exp\left\{\sum_{k i} \left(\langle g^i|q_k \rangle_{L^2(\Rbb)} c_k^\dagger(g^i) - \text{h.c.}\right)\right\}.
\]
By h.c.\ we denote the Hermitian conjugate operator.

\subsubsection{The Hudson-Parthasarathy equation}
Now we want to couple the system $S$ with the fields by constructing a unitary evolution of the composite system in $\Hscr\otimes \Gamma$. When convenient, an operator $X$ on $\Hscr$ (resp. $Y$ on $\Gamma$) is identified with $X\otimes \bbone_\Gamma$ on $\Hscr\otimes \Gamma$ (resp. $\bbone_\Hscr\otimes Y$).

By defining integrals of It\^o type with respect to the increments of the quantum processes $B_k$, $B_k^\dagger$, it is possible to construct adapted operator processes on $\Hscr\otimes \Gamma$ and to develop a quantum stochastic calculus, whose rules are summarized, at a heuristic level, by the quantum It\^o table
\begin{subequations} \label{Itable}
\begin{gather}
\rmd  B_k(t)\, {\rmd } B_l^{\dagger}(t) = \delta_{kl}\, {\rmd } t,   \qquad \rmd  B_k^\dagger(t)\, {\rmd } B_l(t) =0 ,
\\ \rmd  B_k(t)\, {\rmd } B_l(t) =  0,
\qquad
\rmd  B_k^\dagger(t)\, {\rmd } B_l^{\dagger}(t) = 0, \\ \rmd  B_k^\dagger(t)\, {\rmd }t=0, \qquad  \rmd  B_k(t)\, {\rmd }t=0, \qquad (\rmd t)^2=0.
\end{gather}
\end{subequations}
Let us stress that these multiplication rules do not depend on the field  state in the Fock space, but they depend on the representation. Indeed, thermal and squeezed representations, which describe different physical situations, have different It\^o tables \cite[Eqs.\ (5.3.52), (10.2.38)]{GarZ00}.

We can now introduce  the quantum stochastic Schr\"odinger equation or Hudson-Parthasarathy equation \cite{HudP84,Fri85,Parthas92}.
\begin{theorem}[Hudson and Parthasarathy]\label{th1}
Let $H_0$, $R_k$, $k,l=1,\ldots,d$, be bounded operators on $\Hscr$ such that $H_0^\dagger =H_0$. We set also
\begin{equation}\label{K+H}
 K:=-\rmi H_0 -
\frac{1}{2} \sum_k R_k^{\dagger}R_k.
\end{equation}
Then, the quantum stochastic differential equation
\begin{equation}
\rmd U_t = \biggl\{ \sum_k R_k \,\rmd B_k^\dagger(t)
- \sum_{k} R_k^\dagger\,\rmd B_k(t) +
K\,\rmd t \biggr\}  U_t,\label{HPequ}
\end{equation}
with the initial condition $U_0 =\bbone$, has a unique solution, which is a strongly
continuous adapted family of unitary operators on $\Hscr\otimes \Gamma$. Moreover, the family of unitary operators $\Theta_t\,U_t$, $t\geq 0$, and $U^\dagger_{|t|}\,\Theta^\dagger_{|t|}$, $t\leq 0$, is a strongly continuous unitary group.
\end{theorem}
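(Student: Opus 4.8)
The plan is to follow the original Hudson--Parthasarathy argument: construct $U_t$ by Picard iteration on the dense domain spanned by coherent vectors, verify isometry and coisometry by a direct application of the quantum It\^o table \eqref{Itable} and the defining relation \eqref{K+H}, and then deduce the group statement from the cocycle structure forced by the time-independence and shift-covariance of the coefficients.

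First I would work on the total exponential domain, the linear span of vectors $u\otimes e(f)$ with $u\in\Hscr$, $f\in L^2(\Rbb;\Cbb^d)$, since a bounded operator on $\Hscr\otimes\Gamma$ is determined by its matrix elements between such vectors. Rewriting \eqref{HPequ} in integral form, $U_t = \bbone + \int_0^t\bigl(\sum_k R_k\,\rmd B_k^\dagger(s) - \sum_k R_k^\dagger\,\rmd B_k(s) + K\,\rmd s\bigr)U_s$, and recalling $B_k(s)e(f)=\bigl(\int_0^s f_k(r)\,\rmd r\bigr)e(f)$ and the dual formula for $B_k^\dagger$, I would run the iteration $U_t^{(0)}=\bbone$, $U_t^{(n+1)}=\bbone+\int_0^t(\cdots)U_s^{(n)}$, each term being adapted with respect to \eqref{FockFactorization}. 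The analytic core is the basic pair of QSC estimates: for an adapted process $M$, the norms $\norm{\bigl(\int_0^t M_s\,\rmd B_k^\dagger(s)\bigr)u\otimes e(f)}^2$ and $\norm{\bigl(\int_0^t M_s\,\rmd B_k(s)\bigr)u\otimes e(f)}^2$ are dominated by $\int_0^t(1+\abs{f_k(s)}^2)\norm{M_s\,u\otimes e(f)}^2\,\rmd s$ up to constants. With $\norm{H_0},\norm{R_k}<\infty$ and $f\in L^2$, these bound $\norm{(U_t^{(n+1)}-U_t^{(n)})u\otimes e(f)}$ by an $n$-fold iterated time integral, itself $\le c_f(t)^n/n!$ times the exponential weight; hence the series converges absolutely, uniformly on compact $t$-intervals, to a strongly continuous adapted family $U_t$ solving \eqref{HPequ}. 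Uniqueness follows by applying the same estimate to the difference of two solutions and Gronwall's lemma.

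Next I would prove unitarity via the quantum It\^o product rule. For $\rmd(U_t^\dagger U_t)$ the increment contributions from $\rmd U_t^\dagger$ and $\rmd U_t$ cancel pairwise in their stochastic parts; the only surviving It\^o correction, by \eqref{Itable}, comes from $\rmd B_k(t)\,\rmd B_l^\dagger(t)=\delta_{kl}\,\rmd t$ and equals $U_t^\dagger\bigl(\sum_k R_k^\dagger R_k\bigr)U_t\,\rmd t$; together with the drift $U_t^\dagger(K+K^\dagger)U_t\,\rmd t$ and the relation $K+K^\dagger=-\sum_k R_k^\dagger R_k$ from \eqref{K+H}, the differential vanishes, so $U_t^\dagger U_t=\bbone$. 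Coisometry is obtained by the same method: $U_t^\dagger$ solves the adjoint equation $\rmd U_t^\dagger=U_t^\dagger\bigl(\sum_k R_k^\dagger\,\rmd B_k(t)-\sum_k R_k\,\rmd B_k^\dagger(t)+K^\dagger\,\rmd t\bigr)$, and one checks that $\bbone$ solves the linear quantum stochastic differential equation satisfied by $U_tU_t^\dagger$ (the It\^o correction now being $\sum_k R_k^\dagger(U_tU_t^\dagger)R_k\,\rmd t$), so uniqueness forces $U_tU_t^\dagger=\bbone$; hence $U_t$ is unitary.

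Finally, for the group property I would use that the coefficients in \eqref{HPequ} are time independent and that the increments are shift covariant, $\Theta_t^\dagger\,\rmd B_k^{(\dagger)}(s)\,\Theta_t=\rmd B_k^{(\dagger)}(s+t)$, the integrated form of \eqref{EVt}. Then, for $s,t\ge0$, $s'\mapsto\bigl(\Theta_t^\dagger U_{s'}\Theta_t\bigr)U_t$ and $s'\mapsto U_{t+s'}$ solve the same quantum stochastic differential equation in $s'$ with the same value at $s'=0$, so uniqueness gives the cocycle identity $U_{t+s}=\bigl(\Theta_t^\dagger U_s\Theta_t\bigr)U_t$. Setting $V_t:=\Theta_tU_t$ for $t\ge0$ and using $\Theta_{t+s}=\Theta_t\Theta_s$, this yields $V_tV_s=\Theta_tU_t\Theta_sU_s=\Theta_{t+s}\bigl(\Theta_s^\dagger U_t\Theta_s\bigr)U_s=\Theta_{t+s}U_{t+s}=V_{t+s}$; defining $V_t:=U_{\abs{t}}^\dagger\Theta_{\abs{t}}^\dagger=V_{\abs{t}}^\dagger$ for $t\le0$ is then forced, and combined with unitarity of each $V_t$ and the cocycle identity together with its adjoint it gives $V_tV_s=V_{t+s}$ for all real $s,t$. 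Strong continuity of $t\mapsto V_t$ is inherited from that of $U_t$ and of the shift group $\Theta_t$. I expect the main obstacle to be the analytic heart of the second paragraph --- the uniform estimates on iterated quantum stochastic integrals guaranteeing convergence of the Picard series and the careful adaptedness bookkeeping relative to \eqref{FockFactorization}; once these are in place, the unitarity and group statements reduce to algebra with the It\^o table \eqref{Itable} and the shift covariance \eqref{EVt}.
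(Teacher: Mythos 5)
The paper does not prove Theorem \ref{th1} at all: it quotes it from the cited literature (Hudson--Parthasarathy, Frigerio, Parthasarathy), and your sketch is a correct reconstruction of exactly that standard argument --- Picard iteration with the fundamental estimates on the exponential domain, isometry/coisometry from the It\^o table \eqref{Itable} together with $K+K^\dagger=-\sum_k R_k^\dagger R_k$, and the group property for $\Theta_t U_t$ from the cocycle identity obtained via shift covariance \eqref{EVt} and uniqueness. The only points to tighten in a full write-up are that the fundamental estimate for the creation integral carries an exponential factor in $t$ and $\norm{f}$ (harmless on compact intervals, as you implicitly use), and that the coisometry step needs the existence--uniqueness theorem for QSDEs with coefficients acting on both sides of the unknown (valid here since the $R_k$ are bounded).
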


Note that, if we take our limit dynamics \eqref{U-Tprod} of Sec.\ \ref{physapprox}, and we compute the differential $\rmd U_t$ in expression \eqref{Ut+} by expanding the exponential with the It\^o table \eqref{Itable}, we get indeed Eq.\  \eqref{HPequ}. So,
the unitary operators $U_t$  represent the system-field
dynamics in the interaction picture with respect to the free field evolution.

Then, for $t\geq0$, the dynamics in the Schr\"odinger picture is the unitary group $\rme^{-\rmi H_{\mathrm{TOT}}t}=\Theta_t\,U_t$, whose Hamiltonian $H_{\mathrm{TOT}}$ is a singular perturbation of the unbounded generator of $\Theta_t$ \cite{Greg00,Greg01}. Roughly speaking, the system $S$ absorbs or emits bosons instantaneously;
then, the emitted bosons are carried away by their free dynamics and never come back.

Note that the interaction picture with respect to the free field dynamics coincides with the Schr\"odinger picture when only  reduced system states and observables are considered.

\subsection{The reduced dynamics of the system}

The states of a quantum system are represented by statistical operators, positive trace-class
operators with trace one; let us denote by $\Sscr(\Hscr)$ the set of statistical operators on
$\Hscr$. For every composed state $\Sigma$ in $\Sscr(\Hscr\otimes\Gamma)$, the partial trace $\Tr_\Gamma$ (resp.\ $\Tr_\Hscr$) with respect to the field (resp.\ system) Hilbert space gives the reduced system (resp.\ field) state $\Tr_\Gamma\Sigma$ in $\Sscr(\Hscr)$ (resp.\ $\Tr_\Hscr\Sigma$ in $\Sscr(\Gamma)$).

\subsubsection{The initial state and the reduced states}\label{sec:in_st}
As initial state of the composed system ``$S$ plus fields'' we take
$\rho\otimes \varrho_\Gamma(f)\in\Sscr(\Hscr\otimes\Gamma)$, where $\rho\in \Sscr(\Hscr)$ is generic and
$\varrho_\Gamma(f)$ is a coherent state, $\varrho_\Gamma(f):= |e(f)\rangle \langle e(f)|$. Then, the system-field state at time $t$, in the field interaction picture, is
\begin{equation}\label{initial_s}
\Sigma_f(t):= U_t \left(\rho\otimes
\varrho_\Gamma(f)\right)U_t^\dagger.
\end{equation}
The \emph{reduced system state} and the \emph{reduced field state} are
\begin{equation}\label{redst}
\eta_t:=\Tr_\Gamma \left\{\Sigma_f(t)\right\}, \qquad \Pi_f(t):=\Tr_\Hscr\left\{\Sigma_f(t)\right\}.
\end{equation}

\subsubsection{The master equation}
One of the main properties of the Hudson-Parthasarathy equation is that, with the initial
state introduced above, the reduced dynamics of system $S$ exactly obeys a quantum master
equation \cite{HudP84,Parthas92,Bar06}. Indeed, we get
\begin{equation}\label{masteq}
\frac{\rmd \ }{\rmd t}\, \eta_t=\Lcal(t)[\eta_t],
\end{equation}
where the Liouville operator $\Lcal(t)$ turns out to be
\begin{align}\notag
\Lcal(t)[\rho]=& -\rmi\left[H_0+ H_f(t),\,\rho
\right]
\\ {} &+
\sum_k \left(R_k\rho R_k^\dagger-\frac 1 2 R_k^\dagger R_k\rho -\frac 1 2\rho R_k^\dagger R_k\right), \label{Lop}
\end{align}
\begin{equation}\label{hatH}
H_f(t):=   \rmi\sum_k\overline{f_k(t)}R_k-\rmi\sum_k f_k(t)R_k^\dagger.
\end{equation}
Therefore, $S$ is an open system, as it interacts with the fields in $\Gamma$, and its evolution turns out to be Markovian thanks to the properties of the interaction and of the choice of a coherent state as initial state of the environment. Note that the dynamics \eqref{masteq} depends not only on the global evolution \eqref{HPequ} but also on the initial state of the environment $\varrho_\Gamma(f)$.

It is useful to introduce also the evolution operator from $s$ to $t$ by
\begin{equation}\label{propagator}
\frac{\rmd \ }{\rmd t}\,\Upsilon(t,s)=\Lcal(t)\circ \Upsilon(t,s),\qquad
\Upsilon(s,s)=\bbone.
\end{equation}
With this notation we have $ \eta_t=\Upsilon(t,0)[\rho] $.

\section{Continuous monitoring}\label{sec:CM}
The connections among quantum stochastic calculus, quantum Langevin equations and input and output fields were developed by Gardiner and Collet in \cite{GarC85}.
Then, in \cite{Bar86PR} these notions were connected to the unitary evolution \eqref{HPequ} and to continuous measurements. Indeed, another fundamental property of the Hudson-Parthasarathy equation is that it allows for a fully quantum description of a continuous measurement of the system $S$: the measurement is obtained by detecting the bosons that have been emitted by $S$. Of course such a measurement acquires information on both $S$ and the detected bosons.

\subsection{Input and output fields}

Let us call ``input fields'' the fields $B_k(t)$, $B_k^\dagger(t),\ldots$ when they are considered as operators in interaction picture at time $t$, with respect to $\Theta_t$, and let us call ``output fields'' the same fields in the Heisenberg picture:
\begin{equation}\label{out1}
B_{\,k}^{\mathrm{out}}(t):= U_t^\dagger B_k(t) U_t
\end{equation}
and a similar definition for $B_{\,k}^{\mathrm{out}\,\dagger}(t)$. By the properties of the Fock space $\Gamma$ and of the unitary operators $U_t$, it is possible to prove that
\begin{equation}\label{T>t}
B_{\,k}^{\mathrm{out}}(t)= U_T^\dagger B_k(t) U_T\,, \qquad \forall T\geq t.
\end{equation}
This equation is of fundamental importance and it immediately implies that the output fields satisfy the same commutation rules of the input fields, for instance the CCR \eqref{B_CCR}: the output fields remain Bose free fields.
By applying the formal rules of QSC \eqref{Itable}, we can express the output fields as the quantum stochastic integrals \cite{Bar86PR}
\begin{equation}\label{Bout}
B_{\,k}^{\mathrm{out}}(t) = B_k(t) +\int_0^t
U_s^\dagger R_kU_s\, \rmd s;
\end{equation}
$B_{\,k}^{\mathrm{out}\,\dagger}(t)$ is given by the adjoint expression.

\subsection{The field observables}

The key point of the theory of continuous measurements is to consider field observables
represented by time dependent, commuting selfadjoint operators in the Heisenberg
picture \cite{BarL85,Bar86PR,Bar06}. Being commuting at different times, these observables represent outputs
produced at different times which can be obtained in the same experiment. Here, the observables we consider are some field quadratures. Let us start by introducing the selfadjoint operators
\begin{equation}\label{quadrature}
Q(t;\vartheta,h)= \rme^{-\rmi\vartheta}\int_0^t  h(s)\,\rmd B_1^\dagger(s) +
\textrm{h.c.}, \qquad  t\geq 0;
\end{equation}
the phase $\vartheta\in (-\pi,\pi]$ and the function $h$, with $\abs{h(t)}=1$, are fixed.

The operators \eqref{quadrature} have to be interpreted as linear combinations of
the formal increments $\rmd B_1^\dagger(s) $, $\rmd B_1(s) $ which represent field operators
in the
interaction picture.
The corresponding operators in the Heisenberg picture are
\begin{align}\notag
Q^{\mathrm{out}}(t;\vartheta,h):&= U_t^\dagger Q(t;\vartheta,h)U_t \\ {}&=
U_T^\dagger Q(t;\vartheta,h)U_T, \quad\forall T\geq t, \label{Qout}
\end{align}
where the second equality follows from Eq.\ \eqref{T>t}. These ``output'' quadratures are our
observables.

When ``field 1'' represents the electromagnetic field, a physical realization of a measurement
of the observables \eqref{Qout} is implemented by what is called \emph{balanced heterodyne/homodyne
detection} \cite{ShYM79,YuS80,YuenC83}, \cite[Sec.\ 8.4.4]{GarZ00}. The light emitted by the system in the ``channel 1'' interferes with an intense
laser beam represented by the wave $h$, the \emph{local oscillator}; as $|h|=1$, it represents only the phase of the local oscillator wave. The description of the
apparatus and its formalization in mathematical terms  is given in \cite[Sec.\ 3.5]{Bar06}.

Each quadrature $Q^{\mathrm{out}}(t;\vartheta,h)$ is observed at the corresponding time $t$ and it regards those bosons in ``field 1'' which have eventually interacted with $S$
between time $0$ and time $t$, so it can be interpreted as an indirect measurement performed on the system $S$.

By using CCR, one can check that the operators \eqref{quadrature} commute: $[Q(t;\vartheta,h),Q(s;\vartheta,h)]=0$.
The important point is that, thanks to Eq.\ \eqref{Qout}, these operators commute for different times
also in the Heisenberg picture:
\begin{equation}\label{snd}
[Q^{\mathrm{out}}(t;\vartheta,h),Q^{\mathrm{out}}(s;\vartheta,h)]=0.
\end{equation}
Therefore, the observables $Q^{\mathrm{out}}(t;\vartheta,h)$, $t\geq0$, can be jointly measured for every interaction \eqref{HPequ}. The output is a (random) number at every time $t$, that is a signal depending on time, a stochastic process, which is the result of a continuous indirect monitoring of the system $S$. Its probability distribution is given by the usual postulates of quantum mechanics trough the joint diagonalization of the operators $Q^{\mathrm{out}}(t;\vartheta,h)$. Actually, always thanks to Eq.\ \eqref{Qout}, it will be enough to jointly diagonalize the operators $Q(t;\vartheta,h)$.

Let us stress that quadratures of type \eqref{quadrature} with different phases $\vartheta$ and
$h$ functions represent incompatible observables, because they do not commute but satisfy
\[
[Q(t;\vartheta,h),Q(s;\varphi,g)]=2\rmi \int_0^{t\wedge s} \rmd r \, \IM \left(\rme^{\rmi\left(\vartheta- \varphi\right)} \overline{h(r)}\, g(r)\right).
\]
Note that for $g=h$ we get
\begin{equation}\label{2quadrature}
[Q(t;\vartheta,h),Q(s;\varphi,h)]=2\rmi \left(t\wedge s\right)\sin \left(\vartheta- \varphi\right),
\end{equation}
and for $\varphi=\vartheta$ they commute as anticipated.

Let us note that the operator $Q^{\mathrm{out}}(t;\vartheta,h)$ involves the whole time interval $[0,t]$ and has to be interpreted as cumulated output. The \emph{instantaneous output current} is represented by its formal time derivative $\hat I^{\mathrm{out}}(t):=\dot Q^{\mathrm{out}}(t;\vartheta,h)$. From \eqref {Bb}, \eqref{T>t}, \eqref{Bout}, \eqref{Qout} we get
\begin{equation}\label{hatI}
\hat I^{\mathrm{out}}(t)= \rme^{\rmi \vartheta}\, \overline{h(t)}\left( b_1(t) +U_t^\dagger R_1 U_t\right)
+ \text{h.c.}
\end{equation}

\subsection{The stochastic representation}\label{QtoC}

The commuting selfadjoint operators \eqref{quadrature} have a joint projection valued
measure (pvm) $E_\vartheta^h$; by Born rule, it gives the  probability distribution for the output of the continuous measurement. Moreover,    via the partial trace on the fields, $E_\vartheta^h$ gives also the \emph{instruments} describing the transformations of $S$ from time 0 to an arbitrary time $t$, conditioned on the information acquired up to time $t$.
Furthermore, via joint diagonalization and conditioning, the pvm $E_\vartheta^h$ even gives the stochastic evolution of the conditional state $\rho_t$ (or \emph{a posteriori state}),  the state of $S$ at time $t$ given the observed signal from time 0 to time $t$. This evolution turns out to satisfy a stochastic differential equation (SSE or SME), with classical driving noises. The introduction of such stochastic evolution equations for the conditional state was an achievement of the \emph{quantum filtering theory} \cite{Bel88,BelS92,Bel89,Bel94,BouVHJ07,BarB91}.

The passage from the formulation with quantum fields and Hudson-Parthasarathy equation to the one based on classical stochastic differential equations can be done by different techniques. The technique based on the use of isomorphisms between the Fock space and the Wiener space  is very powerful and clear; here we present a variant of the construction given in \cite{BarP96}.

Let us note that the observation we consider is not complete, because it regards only field 1 and involves only positive times. To make unique the isomorphism which diagonalizes the self-adjoint operators \eqref{quadrature}, we need to add fictitious observations, involving quadratures of the fields $2 ,\ldots,d$ too. So, we take a function $\ell \in L^\infty(\Rbb; \Cbb^d)$ such that
\begin{equation}\label{f_ell} \begin{cases}
\abs{\ell_k(t)}=1,  \qquad &\forall t\in \Rbb, \ \forall k,  \\  \ell_1(t)= \rme^{-\rmi \vartheta } h(t),  &\forall t\geq 0.
\end{cases}\end{equation}
Then, we introduce the field quadratures: for $ k=1,\ldots,d$,
\begin{equation}\label{+quadr}
Q_k(t):= \int_0^t \ell_k(s)\rmd B^\dagger_k(s) + \mathrm{h.c.}
\end{equation}
We use this definition for positive and negative times by taking the convention $\int_0^t=-\int_t^0$ for a negative $t$.
These quadratures form a complete set of compatible observables on the Fock space $\Gamma$. Note that $Q_1(t)= Q(t;\vartheta, h)$. In the following subsection we jointly diagonalize all the observables \eqref{+quadr} by introducing an explicit isomorphism between Fock and Wiener spaces.

\subsubsection{Spectral representation on the Wiener space}

Fixed the functions $\ell_1,\ldots,\ell_d$, that is the field quadratures \eqref{+quadr}, we look for a probability space $(\Omega,\Fscr,\Qbb)$, a unitary operator $J:\Gamma[L^2(\Rbb;\Cbb^d)]\to L^2(\Omega,\Fscr,\Qbb)$ (the Hilbert space of the complex square integrable random variables on the given probability space), and a family of random variables $W_k(t)$ on $\Omega$ such that
\begin{equation}\label{WSR}
\bigl(JQ_k(t)\Psi\bigr)(\omega)=W_k(t;\omega)\,\bigl(J\Psi\bigr)(\omega),
\end{equation}
for all $t$, $k$, for almost all $\omega$, and for all $\Psi $ in the domain of the selfadjoint operator $Q_k(t)$. This means that each $Q_k(t)$ is represented in $ L^2(\Omega,\Fscr,\Qbb)$ as the multiplication operator by $W_k(t)$. We can get such a joint diagonalization on the space of the canonical representation of the Wiener process; a short presentation of the canonical Wiener process is given  in \cite[Secs.\ A.2.4, A.2.6]{BarG09}.

\begin{remark}[The Wiener space]\label{rem:Wiener}
Let $\Omega=C_0(\Rbb;\Rbb^d)$ be the space of the continuous functions $\omega: \Rbb \to \Rbb^d$ such that $\omega(0)=0$. We define the $d$-dimensional process $W(t):\Omega\to\Rbb^d$, $t\in\Rbb$, by $W(t,\omega)=\omega(t)$ and we denote by $\Fscr$ the smallest $\sigma$-algebra of subsets of $\Omega$ for which these functions $W(t)$ are  measurable: $\Fscr=\sigma\big(W(t):t\in\Rbb\big)$. Then, there exists a unique probability measure $\Qbb$ on the measurable space $(\Omega, \Fscr)$, the \emph{Wiener measure}, such that the processes $W_k(t)$, $W_k(-t)$, $t\geq 0$, $k=1,\ldots, d$ are $2d$ independent standard Wiener processes.
Moreover, for positive times we introduce the natural filtration $(\Fscr_t)_{t\geq 0}$ of the process $W$: $\Fscr_t=\sigma(W(s):s\in [0,t])$. Finally, the Hilbert space $L^2(\Omega,\Fscr,\Qbb)$ is called \emph{Wiener space}.
\end{remark}
Let us also recall that, if $\phi, \psi\in  L^2(\Omega,\Fscr,\Qbb)$, then their inner product is given by the $\Qbb$-expectation $\Ebb_\Qbb$:
\[
\langle \psi|\phi\rangle= \Ebb_\Qbb[\overline{\psi}\, \phi] =\int_\Omega \overline{\psi(\omega)}\, \phi(\omega)\Qbb(\rmd \omega).
\]

\begin{definition}[The isomorphism $J$]
Let $J:\Gamma[L^2(\Rbb;\Cbb^d)]\to L^2(\Omega,\Fscr,\Qbb)$ be the linear operator defined by: \quad $\forall g\in L^2({\mathbb R}; \Cbb^d)$,
\begin{align}\notag
J \, e(g) =&\exp\left\{\sum_{k=1}^d\int_{-\infty}^{+\infty}  \overline{\ell_k(s)}\, g_k(s)\,\rmd W_k(s)\right\}\\ \notag
{}&\times{} \exp\left\{- \frac{1}{2}\sum_{k=1}^d\int_{-\infty}^{+\infty}  \left( \overline{\ell_k(s)}\, g_k(s)\right)^2\,\rmd s\right\}
\\ {}&\times \exp\left\{- \frac{1}{2}\sum_{k=1}^d\int_{-\infty}^{+\infty}  \abs{g_k(s)}^2\,\rmd s\right\}.\label{isoJ}
\end{align}
In particular we have
\[
J\,e(0)=1,
\qquad
J\, e(1_{(0,t)}f)\in L^2(\Omega,\Fscr_t,\Qbb).
\]
\end{definition}
The operator $J$ turns out to be an  isomorphism and it realizes the representation \eqref{WSR}: $J\,Q_k(t)\,J^{-1}=W_k(t)$, i.e.\ the field quadratures are mapped into the operators  ``multiplication by the Wiener processes''.
Because the isomorphism $J$ jointly diagonalizes all the observables \eqref{+quadr}, then their joint pvm on the Fock space is $J^{-1}1_AJ$, $\forall A \in \Fscr$.

\paragraph{The distribution of the output}
Let us restrict now to the observed quadrature \eqref{quadrature}; the $\sigma$-algebra  $\Gscr_\infty=\sigma(W_1(t):0\leq t <+\infty)$ is the space of all the events regarding our observables $Q(t;\vartheta,h)$, $t\geq 0$.
Then, the joint pvm $E_\vartheta^h$ of the observed quadratures  is defined on the measurable space $(\Omega,\Gscr_\infty)$ by
\begin{equation}\label{pvm}
E_\vartheta^h(G)=J^{-1}1_GJ,\qquad \forall G\in\Gscr_\infty.
\end{equation}
Finally, we get the distribution of the output. By setting
$\Gscr_t=\sigma(W_1(s):s\in [0,t])$,   then, $\Gscr_t\subset \Gscr_\infty$, is the space of the observed events up to time $t$, associated to the observables $Q(s;\vartheta,h)$ for times from 0 to $t$, and, according to the usual rules of quantum mechanics, the probabilities of such events are given by
\begin{equation}\label{P(G)}
\Pbb_{\rho,t}^{\vartheta,h}(G)= \Tr\left\{\bigl( \bbone_\Hscr\otimes E_\vartheta^h(G)\bigr)\Sigma_f(t)\right\},\ \forall G\in \Gscr_t.
\end{equation}

Note that, when the field state is the vacuum and there is no interaction between system $S$ and the fields, this probability reduces to $ \Pbb_{\rho,t}^{\vartheta,h}(G)= \langle e(0)| E_\vartheta^h(G)\,e(0) \rangle= \Ebb_\Qbb[1_G]=\Qbb(G)$. This means that in this case the quadratures \eqref{quadrature} are distributed as a standard Wiener process.

Let us stress that the pvm \eqref{pvm} depends
on the parameters $\vartheta$ and $h$ defining the quadrature \eqref{quadrature}; these parameters are contained in the definition of the isomorphism $J$ \eqref{isoJ}. On the contrary, the choice of the
trajectory space (the measurable space $(\Omega, \Gscr_\infty)$) and the definition of $W_1$ are
independent of $\vartheta$ and $h$. With respect to the time dependence, the physical probabilities \eqref{P(G)} are \emph{consistent}, i.e.
\begin{equation}\label{ts_cons}
0\leq s\leq t, \quad G\in\Gscr_s \ \Rightarrow \ \Pbb_{\rho,t}^{\vartheta,h}(G)=\Pbb_{\rho,s}^{\vartheta,h}(G).
\end{equation}
This result is due to the factorization property \eqref{FockFactorization} of the Fock space and to the localization properties of $U_t$ \cite[Theor.\ 2.3]{Bar06}, which imply
$
U_t^\dagger \bigl( \bbone_\Hscr\otimes E_\vartheta^h(G)\bigr)U_t=U_s^\dagger \bigl( \bbone_\Hscr\otimes E_\vartheta^h(G)\bigr)U_s$ for $0\leq s\leq t$ and $G\in \Gscr_s$, cf.\ Eq.\ \eqref{Qout}.

A more detailed study of the statistical properties of the output needs the introduction of the characteristic operator (Sec.\ \ref{co,pm}).

\subsubsection{The instruments}\label{sec:inst}
The observation of the emitted field can be interpreted as an indirect measurement on the system $S$ and this is formalized by the concept of \emph{instrument} \cite{Kra80,Dav76}. The family of instruments $\Ical_t$, $t>0$, describing our measure is defined by: $\forall G\in \Gscr_t$, $\forall\tau \in \Sscr(\Hscr)$,
\begin{equation}\label{inst}
\Ical_t(G)[\tau]= \Tr_\Gamma\left\{\bigl( \bbone_\Hscr\otimes E_\vartheta^h(G)\bigr)U_t \bigl(\tau\otimes \varrho_\Gamma(f)\bigr)U_t^\dagger\right\}.
\end{equation}
For $\tau=\rho$, the initial system state, Eq.\ \eqref{inst} gives the unnormalized state of $S$ at time $t$ conditioned on the information that the values of the signal in the time interval from 0 to $t$ were in $G$. Of course we have
\begin{equation}\label{P+inst}
 \Tr\left\{\Ical_t(G)[\rho]\right\}=\Pbb_{\rho,t}^{\vartheta,h}(G) ,
\end{equation}
while the normalized conditioned state is given by $\Ical_t(G)[\rho]$ divided by its trace \eqref{P+inst}.

Let us remark that
\begin{equation}\label{etaI}
\eta_t=\Ical_t(\Omega)[\rho],
\end{equation}
so that the system reduced state at time $t$ in the case of no observation ($\eta_t$) coincides with the so called \emph{a priori state} ($\Ical_t(\Omega)[\rho]$), that is the system state at time $t$ in the case of observation performed but not taken into account. This is in agreement with our rough picture of the measurement process: we observe fields which have already interacted with system $S$ and which will never interact again with it. This means that we acquire information on $S$, as we have  $\Ical_t(G)[\rho]\neq  \Pbb_{\rho,t}^{\vartheta,h}(G) \eta_t$, but we do not add any perturbation on its evolution as we have $\Ical_t(\Omega)[\rho]= \eta_t$.

\paragraph{The a posteriori states}

Now we want to introduce $\rho_t$, the state of $S$ at time $t$ conditioned on the whole information supplied by our indirect measurement between time 0 and time $t$, that is by the signal produced by the measurement of $Q(s;\vartheta,h)$ for $ s\in [0,t]$. Therefore, $\rho_t$ has to be a random state depending on the output $W_1(s)$, $0\leq s\leq t$, that is a random state measurable with respect to $\Gscr_t$; in other terms, we have the functional dependence $\rho_t(\omega)= \rho_t\big(\omega_1(s),\, 0\leq s \leq t)$. Such a state is called \emph{a posteriori state} and it is determined by the initial state $\rho$ and by the instrument $\Ical_t$: it is the unique $\Gscr_t$-measurable random state such that
\begin{equation}\label{aps}
\Ical_t(G)[\rho]= \int_G \rho_t(\omega)\,\Pbb_{\rho,t}^{\vartheta,h}(\rmd\omega),  \qquad \forall G\in \Gscr_t.
\end{equation}
The definition of a posteriori state is not linked only to measurements in continuous time, but it has been introduced for a generic instrument \cite{Ozawa}.

As we have a reference probability $\Qbb$ on the output space $(\Omega,\Gscr_t)$ we can equivalently look for the \emph{unnormalized a posteriori state} $\sigma_t$, the unique $\Gscr_t$-measurable random positive operator such that
\begin{equation}\label{nnap}
\Ical_t(G)[\rho]= \int_G \sigma_t(\omega)\,\Qbb(\rmd\omega),  \qquad \forall G\in \Gscr_t.
\end{equation}
Then, $\Tr\{\sigma_t\}$ is the probability density of $\Pbb_{\rho,t}^{\vartheta,h}$ with respect to $\Qbb$ and we have $\rho_t=\sigma_t/\Tr\{\sigma_t\}$.

The unnormalized a posteriori state $\sigma_t$ can be computed by using the spectral representation \eqref{WSR} of the operators $Q_k$ and its evolution can be obtained by passing through the SSE.

\subsubsection{The stochastic Schr\"odinger equation}\label{SSE}

In order to compute the a posteriori state of our instrument $\Ical_t$ \eqref{inst}, it is convenient to pass through two fictitious  instruments: $\hat {\Jcal_t}$,  associated to a complete set of compatible observables in $\Gamma$, and $\Jcal_t$,  associated to a complete set of compatible observables in $\Gamma[(0,t)]$. This latter instrument has the simple a posteriori state \eqref{sigmaJ},  whose evolution is given by the SSE \eqref{lSSE}.

First of all, let us  imagine, in the Heisenberg picture, that in the time interval $[0,t]$ we measure all the quadratures $Q_k^{\mathrm{out}}(s)=U_s^\dagger Q_k(s)U_s$, $k=1,\ldots,d$, $s\in[0,t]$, and moreover we conclude the measure by observing at time $t$  also the field observables $\hat Q_k(u;t)=U_t^\dagger Q_k(u) U_t$, $k=1,\ldots,d$ and $u<0$ or $u>t$. This is a family of commuting observables, thanks to \eqref{2quadrature} and to \eqref{T>t}, that implies $Q_k^{\mathrm{out}}(s)=U_t^\dagger Q_k(s)U_t$ for every $k$. Then, the instrument $\hat{\Jcal_t}$ associated to this fictitious measurement is given by an expression analogous to \eqref{inst}. Using again the joint pvm $J^{-1}1_AJ$, $A \in \Fscr$, of the quadratures $Q_k(s)$, if the system initial state is pure, $\rho=|r\rangle \langle r|$, $r\in\Hscr$, $\norm{r}=1$, then $\forall F\in \Fscr$,
\begin{equation}\label{INSThat}
\hat{\Jcal_t}(F)[|r\rangle\langle r|]= \Tr_\Gamma\left\{\bigl( \bbone_\Hscr\otimes J^{-1}1_FJ\bigr)|\Psi_t\rangle\langle\Psi_t|\right\},
\end{equation}
where
\[
\Psi_t= U_t\bigl(r\otimes e(f)\bigr).
\]

The isomorphism $J^{-1}$ does not involve the space $\Hscr$ and it can be cycled after $\langle\Psi_t|$; in this way we get
\begin{equation}\label{INST}
\hat{\Jcal_t}(F)[|r\rangle\langle r|]= \int_F |\varphi_t(\omega)\rangle\langle \varphi_t(\omega)|\Qbb(\rmd\omega),
\end{equation}
where $\varphi_t$ is the random $\Hscr$-vector
\begin{equation*}
\varphi_t=J\Psi_t=J\,U_t\bigl(r\otimes e(f)\Bigr).
\end{equation*}
By comparing Eq.\ \eqref{INST} with Eq.\ \eqref{nnap}, we get that the unnormalized a posteriori state $\sigma_t^{\hat \Jcal}$ associated to the instrument $\hat\Jcal$ and to the pre-measurement system state $\rho=|r\rangle\langle r|$ is
\begin{equation*}
\sigma_t^{\hat \Jcal}(\omega)=|\varphi_t(\omega)\rangle\langle\varphi_t(\omega)|.
\end{equation*}
By construction, $\sigma_t^{\hat \Jcal}$ is a random positive trace-class operator, which is $\Fscr$-measurable.

Suppose now that we measure only the quadratures $Q_k^{\mathrm{out}}(s)$, $0\leq s\leq t$, $k=1,\ldots,d$; note that this set of compatible observables is complete in $\Gamma[(0,t)]$, not in $\Gamma[\Rbb]$.
With respect to the previous case, we simply have to drop some commuting observables and thus the new instrument $\Jcal_t$ is just the restriction of $\hat{\Jcal_t}$ to the $\sigma$-algebra $\Fscr_t\subset\Fscr$ and, therefore, the new a posteriori state $\sigma_t^\Jcal$ is the conditional expectation
\begin{equation*}
\sigma_t^\Jcal=\Ebb_\Qbb\bigl[\sigma_t^{\hat\Jcal}\big|\Fscr_t\bigr],
\end{equation*}
which is an $\Fscr_t$-measurable random positive operator. Thanks to the properties of the Hudson-Parthasarathy equation and to the choice of the observed quadratures (local in $(0,t)$ and no $k$ neglected) the a posteriori state is still almost surely pure:
\begin{equation}\label{sigmaJ}
\sigma_t^\Jcal(\omega)=|\phi_t(\omega)\rangle\langle\phi_t(\omega)|, \quad \phi_t=J\,U_t\bigl(r\otimes e(f1_{(0,t)})\bigr).
\end{equation}

\paragraph{The linear SSE}
It is the $(\Fscr_t)$-adapted stochastic process $\phi_t$ that satisfies the linear SSE and we show now how to get it.

By  introducing the Weyl operators $\Wcal_t:=\Wcal(f1_{(0,t)})$ we can write $e(f1_{(0,t)})=\Wcal_t e(0)$ and
\begin{equation}\label{JUW}
\phi_t =J\,U_t\,\Wcal_t\bigl(r\otimes e(0)\bigr).
\end{equation}

By using the definition of the Weyl operators and the quantum  stochastic calculus it is easy to check that $\Wcal_t$ satisfies a quantum SSE \eqref{HPequ} with $R_k=f_k(t)$ and $H_0=0$. Then, the quantum stochastic differential of $U_t\Wcal_t$ is given by
$
\rmd\left(U_t\Wcal_t\right)=\bigl(\left(\rmd U_t\right)\Wcal_t+U_t\,\rmd\Wcal_t+\left(\rmd U_t\right)\rmd\Wcal_t\bigr)$.
This can be computed by using the quantum It\^o table \eqref{Itable} and  exploiting that operators localized in disjoint time intervals commute and that the differentials $\rmd  B_k^{(\dagger)}(t)$ are localized in $(t,t+\rmd t)$ with respect to the factorization \eqref{FockFactorization} of the Fock space; in particular this means that $\rmd B_k(t)$ and $U_t$ commute and the same holds for  $\rmd B_k(t)$ and $\Wcal_t$. The result is
\begin{widetext}
\begin{equation*}
\rmd\left(U_t\Wcal_t\right)=\biggl\{K\rmd t +
\sum_k \left[ \bigl(R_k +f_k(t)\bigr)\rmd B_k^\dagger(t)-\left(R_k^\dagger+\overline{f_k(t)}\right)\rmd B_k(t)
-\left({\textstyle\frac{1}{2}}\abs{f_k(t)}^2+ f_k(t)R_k^\dagger\right) \rmd t\right]\biggr\}U_t\,\Wcal_t.
\end{equation*}
Now, we use this result to compute the differential of $\phi_t$ \eqref{JUW}. The key point is that  \[
\bigl(\rmd B_k(t)\bigr)U_t\,\Wcal_t\left(r\otimes e(0)\right)=U_t\,\Wcal_t\,\rmd B_k(t)\left(r\otimes e(0)\right)=0,\]
so that we can change the coefficient of $\rmd B_k(t)$ as we wish.
By this and the fact that $\abs{\ell_k(t)}=1$, we can write
\begin{equation*}
\rmd\phi_t=J\biggl\{K\rmd t +
\sum_k \left[ \bigl(R_k +f_k(t)\bigr)\overline{\ell_k(t)}\,\rmd Q_k(t)
-\left({\textstyle\frac{1}{2}}\abs{f_k(t)}^2+ f_k(t)R_k^\dagger\right) \rmd t\right]\biggr\}U_t\,\Wcal_t\bigl(r\otimes e(0)\bigr).
\end{equation*}
Finally, by Eqs.\ \eqref{K+H}, \eqref{WSR}, and \eqref{JUW} we obtain the final form \eqref{lSSE} of the linear SSE. We collect this result and the main property of its solution in a theorem.
\begin{theorem}[linear stochastic Schr\"odinger equation]\label{th2}
In the hypotheses of Theorem \ref{th1} and Remark \ref{rem:Wiener}, the random vector $\phi_t=J\,U_t\bigl(r\otimes e(f1_{(0,t)})\bigr)$, where $J$ is the isomorphism \eqref{isoJ} and $f$ is a function such that $f1_{(0,t)}\in L^2(\mathbb{R};\Cbb^d)$, $\forall t>0$, is the unique solution of the It\^o-type stochastic differential equation
\begin{equation}\label{lSSE}
\rmd\phi_t=\biggl\{\sum_k \biggl[\overline{\ell_k(t)}\bigl(R_k+f_k(t)\bigr)\rmd W_k(t)-\frac{1}{2}\left(R_k^\dagger+\overline{f_k(t)}\right)\bigl(R_k+f_k(t)\bigr)\rmd t\biggr]
-\rmi \biggl[H_0+\frac{\rmi}{2}\sum_k\left(\overline{ f_k(t)}\,R_k-f_k(t)R_k^\dagger\right)\biggr]\rmd t\biggr\}
\phi_t;
\end{equation}
the function $\ell $ appears in the definition of $J$ and it is introduced in Eq.\ \eqref{f_ell}. Moreover, $\norm{\phi_t}^2_\Hscr$, $t\geq 0$,  is a $\Qbb$-martingale.
\end{theorem}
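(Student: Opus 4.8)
The plan is to promote to a rigorous argument the formal computation carried out just before the statement, and then to add the uniqueness and the martingale claims. Throughout I write $\phi_t=J\,U_t\,\Wcal_t\bigl(r\otimes e(0)\bigr)$ as in \eqref{JUW}, with $\Wcal_t=\Wcal(f1_{(0,t)})$, and I use that $\Wcal_t$ itself solves a Hudson--Parthasarathy equation of the form \eqref{HPequ} with $H_0=0$ and with each $R_k$ replaced by the scalar $f_k(t)$.

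First I would establish the equation. Applying the quantum It\^o product rule (the table \eqref{Itable}) to $U_t\Wcal_t$, and using adaptedness---the increments $\rmd B_k^{(\dagger)}(t)$ are localised in $(t,t+\rmd t)$ with respect to the factorisation \eqref{FockFactorization}, hence commute with $U_t$ and with $\Wcal_t$---yields the quantum stochastic differential of $U_t\Wcal_t$ displayed above the statement. Acting on $r\otimes e(0)$, the $\rmd B_k(t)$ term annihilates the vacuum, so its coefficient may be altered at will; since $\abs{\ell_k(t)}=1$, this lets me replace $\rmd B_k^\dagger(t)$ acting on $U_t\Wcal_t(r\otimes e(0))$ by $\overline{\ell_k(t)}\,\rmd Q_k(t)$ modulo a multiple of $\rmd B_k(t)$ that kills the vector. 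The crucial step is then the application of $J$: because $J$ realises the spectral representation \eqref{WSR}, it intertwines the quantum stochastic integral against $\rmd Q_k$ with the ordinary It\^o integral against $\rmd W_k$; substituting $K$ from \eqref{K+H} and collecting the $\rmd t$--terms reproduces exactly \eqref{lSSE}. I expect this intertwining---quantum stochastic integration against $\rmd Q_k$ becoming classical It\^o integration against $\rmd W_k$ under $J$---to be the main point requiring care; the rest is bookkeeping.

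Next, adaptedness and uniqueness. The vector $U_t\Wcal_t(r\otimes e(0))$ lies in $\Hscr\otimes\Gamma[(0,t)]$ (tensored with the vacuum on the complement), since $U_t$ is adapted and assembled from integrals $\int_0^t(\cdots)$ while $\Wcal_t$ only displaces the factor $\Gamma[(0,t)]$; hence $\phi_t$ is $\Fscr_t$--measurable, using $J\,e(1_{(0,t)}g)\in L^2(\Omega,\Fscr_t,\Qbb)$. Equation \eqref{lSSE} is a linear It\^o equation in the Hilbert space $\Hscr$ whose coefficients are bounded operators on $\Hscr$ multiplied by the scalar functions $\ell_k$ (of modulus one) and $f_k$ (with $f1_{(0,t)}\in L^2$), hence locally bounded, respectively locally square--integrable, in $t$; thus the coefficients are Lipschitz in the unknown with linear growth on every bounded time interval, and the standard existence and uniqueness theorem for such equations gives a unique adapted solution. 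Since $\phi_t$ solves \eqref{lSSE} and is adapted, it is that solution.

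Finally, the martingale property. Write \eqref{lSSE} as $\rmd\phi_t=\sum_kD_k(t)\phi_t\,\rmd W_k(t)+C(t)\phi_t\,\rmd t$, where $D_k(t)=\overline{\ell_k(t)}\bigl(R_k+f_k(t)\bigr)$ and $C(t)$ is the bracketed drift coefficient. By the It\^o formula, $\rmd\norm{\phi_t}^2_\Hscr$ equals a stochastic integral plus the drift $\bigl(\langle\phi_t|(C(t)+C(t)^\dagger)\phi_t\rangle+\sum_k\langle\phi_t|D_k(t)^\dagger D_k(t)\phi_t\rangle\bigr)\rmd t$. Since $\abs{\ell_k(t)}=1$ we have $D_k(t)^\dagger D_k(t)=(R_k^\dagger+\overline{f_k(t)})(R_k+f_k(t))$, while the explicitly anti--Hermitian part of $C(t)$ cancels in $C(t)+C(t)^\dagger$, leaving $C(t)+C(t)^\dagger=-\sum_k(R_k^\dagger+\overline{f_k(t)})(R_k+f_k(t))$; hence the drift of $\norm{\phi_t}^2_\Hscr$ vanishes and $\norm{\phi_t}^2_\Hscr$ is a local martingale. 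Being nonnegative, it is a supermartingale; and its expectation is constant in $t$, since $\Ebb_\Qbb\bigl[\norm{\phi_t}^2_\Hscr\bigr]=\norm{J\,U_t\bigl(r\otimes e(f1_{(0,t)})\bigr)}^2=\norm{r}^2\,\norm{e(f1_{(0,t)})}^2=1$, using that $J$ and $U_t$ are unitary and that coherent vectors are normalised. A nonnegative supermartingale with constant expectation is a martingale, which proves the last assertion.
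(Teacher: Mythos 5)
Your derivation of \eqref{lSSE} follows the paper's own route step for step: the factorisation $\phi_t=J\,U_t\,\Wcal_t(r\otimes e(0))$, the quantum It\^o product rule for $U_t\Wcal_t$, the vacuum-annihilation trick that lets you trade $\rmd B_k^\dagger(t)$ for $\overline{\ell_k(t)}\,\rmd Q_k(t)$, and the intertwining $J\,Q_k(t)\,J^{-1}=W_k(t)$ to pass to classical It\^o integrals. Your treatment of the martingale claim is also the paper's (compute the stochastic differential of $\norm{\phi_t}^2_\Hscr$ and check the drift cancels), with the welcome extra care of upgrading the local martingale to a true martingale via the nonnegative-supermartingale/constant-expectation argument and of spelling out adaptedness and uniqueness, which the paper leaves implicit; the proof is correct.
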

\end{widetext}

The fact that $\norm{\phi_t}^2_\Hscr$ is a $\Qbb$-martingale, i.e.
\[
\Ebb_\Qbb\bigl[\norm{\phi_t}^2_\Hscr\big|\Fscr_s\bigr]=\norm{\phi_s}^2_\Hscr, \qquad t\geq s\geq 0,
\]
is proved by computing its stochastic differential.

By the results above, we have that the a posteriori evolution of system $S$, under the continuous measurement $\Jcal_t$, is a stochastic evolution mapping pure states into pure states. In particular, the evolution of the unnormalized pure state $\phi_t$  is given by the linear SSE \eqref{lSSE} and it is Markovian. Such an evolution depends on the interaction \eqref{HPequ} between $S$ and $\Gamma$, on the field initial state $\varrho_\Gamma(f)$ and on the observed quadratures $Q_k^{\mathrm{out}}(s)$.

It is now possible to show that $\norm{\phi_T(\omega)}^2_\Hscr \Qbb(\rmd \omega)$ defines a new probability on $(\Omega, \Fscr_T)$ and that, under this new probability, the process $\psi_t:= \phi_t /\norm{\phi_t}_\Hscr$, $t\in[0,T]$, satisfies a nonlinear stochastic differential equation. This last equation is the nonlinear SSE, which is the starting point for useful numerical simulation. A key point in the change of probability is the fact that $\norm{\phi_t}^2_\Hscr$ is a $\Qbb$-martingale and that the so called Girsanov transformation can be invoked. For the theory of the linear and nonlinear SSE we refer to \cite[Sec.\ 2]{BarG09}.

\subsubsection{The stochastic master equation}
By It\^o calculus, from the SSE \eqref{lSSE} we get the stochastic equation satisfied by the random operator $\sigma_t^\Jcal$ \eqref{sigmaJ}:
\begin{align}\notag
\rmd \sigma_t^\Jcal= &\Lcal(t)[\sigma_t^\Jcal]\, \rmd t
+ \sum_k \Bigl\{ \overline{\ell_k(t)} \bigl( R_k +f_k(t)\bigr)\sigma_t^\Jcal\\{}&+ \sigma_t^\Jcal\, \ell_k(t) \left(R_k^\dagger +\overline{f_k(t)} \right) \Bigr\} \rmd W_k(t), \label{SME1}
\end{align}
where $\Lcal(t)$ is the Liouville operator \eqref{Lop}.

We can now get rid of the hypothesis of a pure initial state and prove that Eq.\ \eqref{SME1} gives the a posteriori evolution for a generic system initial state
\[
\rho=\sum_\ell p_\ell |r_\ell\rangle\langle r_\ell|, \quad \norm{r_\ell}_\Hscr=1, \quad p_\ell >0, \quad \sum_\ell p_\ell=1.
\]
Indeed, if we set
\[
\phi_t^\ell =J\,U_t\,\Wcal_t\bigl(r_\ell\otimes e(0)\bigr),
\]
then, by linearity the process
\begin{equation*}
\sigma_t^{\Jcal}(\omega)=\sum_\ell p_\ell|\varphi_t^\ell(\omega)\rangle\langle\varphi_t^\ell(\omega)|
\end{equation*}
is adapted, satisfies Eq.\  \eqref{SME1}, and gives the a posteriori state of $\Jcal_t$,
\begin{equation}
\Jcal_t(F)[\rho]=\int_F \sigma_t^\Jcal(\omega)\,\Qbb(\rmd \omega), \qquad \forall F\in \Fscr_t.
\end{equation}

Finally, we consider our instrument $\Ical_t$, which is the restriction of $\Jcal_t$ to $\Gscr_t$, so that Eq.\ \eqref{nnap}, defining the unnormalized a posteriori states, holds with
\begin{equation}\label{y}
\sigma_t=\Ebb_\Qbb\bigl[\sigma_t^\Jcal\big|\Gscr_t\bigr]
.
\end{equation}
Let us  recal that $\ell_1(t)=\rme^{-\rmi \vartheta }h(t)$, that the $\Qbb$-mean of any $W_k(s)$ is zero, that $\Gscr_t$ is generated by $W_1$ and that the other components of the Wiener process are independent from the first one. Then, by applying the conditional expectation with respect to $\Gscr_t$ to \eqref{SME1}, we obtain the linear SME for the unnormalized a posteriori states of $\Ical_t$.
\begin{theorem}[Lin.\ stochastic master equation]\label{th2b}
In the hypotheses of Theorem \ref{th1} and Remark \ref{rem:Wiener}, the unnormalized a posteriori states $\sigma_t$ of the instrument $\Ical_t$ \eqref{inst}, realized by the observation of the field quadratures $Q^{\mathrm{out}}(s;\vartheta,h)$, $0\leq s\leq t$, \eqref{Qout}, is the unique solution of the It\^o-type stochastic differential equation
\begin{align}\notag
\rmd \sigma_t=& \Lcal(t)[\sigma_t]\, \rmd t
+ \Bigl\{ \rme^{\rmi \vartheta}\overline{h(t)} \bigl( R_1 +f_1(t)\bigr)\sigma_t\\{}&+ \sigma_t\, \rme^{-\rmi \vartheta}h(t) \left(R_1^\dagger +\overline{f_1(t)} \right) \Bigr\} \rmd W_1(t). \label{SME2}
\end{align}
Moreover, $\Tr\left\{\sigma_t\right\}$, $t\geq 0$,  is a $\Qbb$-martingale.
\end{theorem}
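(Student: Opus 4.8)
The plan is to derive \eqref{SME2} by taking the conditional expectation $\Ebb_\Qbb[\,\cdot\,|\Gscr_t]$ of the already-established SME \eqref{SME1} for $\sigma_t^\Jcal$ and using the relation $\sigma_t=\Ebb_\Qbb[\sigma_t^\Jcal|\Gscr_t]$ of \eqref{y}, and then to dispatch uniqueness and the martingale property separately. First I would put \eqref{SME1} in integral form, $\sigma_t^\Jcal=\rho+\int_0^t\Lcal(s)[\sigma_s^\Jcal]\,\rmd s+\sum_k\int_0^t G_s^k\,\rmd W_k(s)$ with $G_s^k:=\overline{\ell_k(s)}(R_k+f_k(s))\sigma_s^\Jcal+\sigma_s^\Jcal\,\ell_k(s)(R_k^\dagger+\overline{f_k(s)})$, and apply $\Ebb_\Qbb[\,\cdot\,|\Gscr_t]$ term by term; the requisite integrability comes from the boundedness of the $R_k$, the local square integrability of $f$, and the finiteness of $\Ebb_\Qbb[\Tr\{\sigma_s^\Jcal\}]$ (a consequence of the martingale property in Theorem~\ref{th2}), after a localisation in time if necessary. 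For the drift, a conditional Fubini theorem together with the boundedness and linearity of $\Lcal(s)$ gives $\Ebb_\Qbb[\int_0^t\Lcal(s)[\sigma_s^\Jcal]\,\rmd s\,|\,\Gscr_t]=\int_0^t\Lcal(s)[\Ebb_\Qbb[\sigma_s^\Jcal|\Gscr_t]]\,\rmd s$; and for $s\le t$ the increments of $W_1$ on $[s,t]$ are independent of $\Fscr_s$, hence of $\sigma(\sigma_s^\Jcal)\vee\Gscr_s$, so $\Ebb_\Qbb[\sigma_s^\Jcal|\Gscr_t]=\Ebb_\Qbb[\sigma_s^\Jcal|\Gscr_s]=\sigma_s$ and the drift collapses to $\int_0^t\Lcal(s)[\sigma_s]\,\rmd s$.

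The substance is in the stochastic integrals. For $k\ge2$ I would pass to the enlarged filtration $\Gscr_t':=\Gscr_\infty\vee\sigma(W_j(s):2\le j\le d,\ 0\le s\le t)$: by the mutual independence stated in Remark~\ref{rem:Wiener}, each $W_k$ with $k\ge2$ is still a Brownian motion for $(\Gscr_t')_{t\ge0}$ while $G_s^k$ stays $(\Gscr_s')$-adapted, so $\int_0^\cdot G_s^k\,\rmd W_k(s)$ is a $(\Gscr_t')$-martingale null at the origin; since $\Gscr_0'=\Gscr_\infty\supseteq\Gscr_t$, conditioning first on $\Gscr_0'$ and then on $\Gscr_t$ kills these $d-1$ terms. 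For $k=1$ I would use the standard optional-projection (filtering) identity $\Ebb_\Qbb[\int_0^t G_s^1\,\rmd W_1(s)\,|\,\Gscr_t]=\int_0^t\Ebb_\Qbb[G_s^1|\Gscr_s]\,\rmd W_1(s)$, which one proves (entrywise, for scalar integrands) by verifying that both sides have the same $L^2(\Qbb)$ inner product against every bounded $\Gscr_t$-measurable $\xi$: writing $\xi=\Ebb_\Qbb[\xi]+\int_0^t h_s\,\rmd W_1(s)$ via the martingale representation theorem for $W_1$ (with $h$ adapted) and using the It\^o isometry twice, the $\Gscr_s$-measurability of $h_s$ makes the two pairings coincide. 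Since $\ell_1(s)=\rme^{-\rmi\vartheta}h(s)$ by \eqref{f_ell}, this yields $\Ebb_\Qbb[G_s^1|\Gscr_s]=\rme^{\rmi\vartheta}\overline{h(s)}(R_1+f_1(s))\sigma_s+\sigma_s\,\rme^{-\rmi\vartheta}h(s)(R_1^\dagger+\overline{f_1(s)})$, which is exactly the diffusion coefficient of \eqref{SME2}; collecting the three contributions gives the integral form of \eqref{SME2} for $\sigma_t$.

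Finally, existence of a solution of \eqref{SME2} being provided by this very construction, uniqueness follows because \eqref{SME2} is a linear It\^o equation in the Banach space of trace-class operators on $\Hscr$ whose drift and diffusion maps are bounded with time-integrable norm: the difference of two solutions obeys a closed linear bound and Gronwall's lemma forces it to vanish. For the martingale property I would apply $\Tr$ to \eqref{SME2}: $\Tr\{\Lcal(t)[\sigma_t]\}=0$ because $\Lcal(t)$ has the Lindblad form \eqref{Lop}, so $\Tr\{\sigma_t\}$ has zero drift, $\rmd\Tr\{\sigma_t\}=2\RE(\rme^{\rmi\vartheta}\overline{h(t)}\Tr\{(R_1+f_1(t))\sigma_t\})\,\rmd W_1(t)$, and the integrability secured above turns this local martingale into a genuine $\Qbb$-martingale; alternatively, $\Tr\{\sigma_t\}=\Ebb_\Qbb[\Tr\{\sigma_t^\Jcal\}|\Gscr_t]$ with $\Tr\{\sigma_t^\Jcal\}$ an $\Fscr_t$-martingale by Theorem~\ref{th2}, and the tower property over $\Gscr_s\subset\Gscr_t$ and $\Gscr_s\subset\Fscr_s$ gives the conclusion directly. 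I expect the optional-projection identity for the $k=1$ integral, together with the attendant integrability/localisation bookkeeping, to be the only genuinely delicate point; everything else is routine.
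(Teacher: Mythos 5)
Your proposal is correct and follows essentially the same route as the paper: condition the SME \eqref{SME1} for $\sigma_t^\Jcal$ on $\Gscr_t$, using $\ell_1(t)=\rme^{-\rmi\vartheta}h(t)$, the identification $\sigma_t=\Ebb_\Qbb\bigl[\sigma_t^\Jcal\big|\Gscr_t\bigr]$ of Eq.\ \eqref{y}, and the independence of $W_2,\dots,W_d$ from $W_1$. The paper only sketches this step (deferring details to \cite{BarG09}), while you make explicit the optional-projection identity for the $k=1$ integral, the filtration-enlargement argument killing the $k\geq 2$ integrals, and the uniqueness/martingale bookkeeping, all of which are sound.
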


As already said at the end of Sec.\ \ref{sec:inst}, the quantity $\Tr\left\{\sigma_t(\omega)\right\}$ is the density of the physical probability with respect to $\Qbb$. Indeed, from Eqs.\ \eqref{nnap} and \eqref{P+inst} we get
\begin{equation}\label{xx}
\Pbb_{\rho,t}^{\vartheta,h}(G)=\int_G \Tr\left\{\sigma_t(\omega)\right\}\Qbb(\rmd \omega), \qquad \forall G \in \Gscr_t.
\end{equation}
In the stochastic formulation, it is just the martingale property that implies the consistency property \eqref{ts_cons}, which we already encountered in the Fock space formulation.

As already seen, if we define $\rho_t=\sigma_t/\Tr\{\sigma_t\}$, we get the a posteriori state for the instrument $\Ical_t$ and the pre-measurement state $\rho$. By \eqref{etaI}, the a posteriori states are related to the system reduced state by
\begin{equation}\label{zz}
\int_\Omega \rho_t(\omega)\, \Pbb_{\rho,t}^{\vartheta,h}(\rmd \omega) =\int_\Omega \sigma_t(\omega)\, \Qbb(\rmd \omega)=\eta_t.
\end{equation}

It is also possible to prove that, under the physical probability, $\rho_t$ satisfies a nonlinear SME. Moreover, by studying the stochastic differential of  $\Tr\left\{\sigma_t\right\}$ and by using Girsanov theorem, it is possible to prove the following result.

\begin{proposition}[The output and the noise]
Under the physical probability $\Pbb^{\vartheta,h}_{\rho,T}$ the process
\[
\widehat W_1(t):= W_1(t) -2\RE\int_0^t \rme^{\rmi \vartheta} h(s) \Tr\left\{ \bigl(R_1+f_1(s)\bigr) \rho_s\right\} \rmd s
\]
is a standard Wiener process for $t\in[0,T]$.
\end{proposition}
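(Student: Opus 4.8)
The plan is to exhibit $\widehat W_1$ as the driving noise that emerges when one normalizes the linear SME \eqref{SME2}, using the martingale property of $\Tr\{\sigma_t\}$ together with Girsanov's theorem. First I would compute the stochastic differential of $Z_t:=\Tr\{\sigma_t\}$ by taking the trace in \eqref{SME2}: since $\Tr\{\Lcal(t)[\sigma_t]\}=0$ and $\Tr\{R_1\sigma_t+\sigma_t R_1^\dagger\}=2\RE\Tr\{R_1\sigma_t\}$, one gets an equation of the form $\rmd Z_t = v(t)\,Z_t\,\rmd W_1(t)$ with $v(t)=2\RE\bigl(\rme^{\rmi\vartheta}\overline{h(t)}\,\Tr\{(R_1+f_1(t))\rho_t\}\bigr)$, where I have written $\sigma_t=Z_t\rho_t$. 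Thus $Z_t$ is a positive local martingale with $Z_0=1$; by Theorem \ref{th2b} it is a genuine $\Qbb$-martingale, so it is an exponential (Doléans-Dade) martingale, $Z_t=\exp\{\int_0^t v(s)\,\rmd W_1(s)-\tfrac12\int_0^t v(s)^2\,\rmd s\}$, and it is exactly the density appearing in \eqref{xx} that defines the physical probability $\Pbb^{\vartheta,h}_{\rho,T}$ on $(\Omega,\Gscr_T)$.

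Next I would invoke Girsanov's theorem in the form adapted to the filtration $(\Gscr_t)$ generated by $W_1$: if $\Pbb^{\vartheta,h}_{\rho,T}$ has density $Z_T$ with respect to $\Qbb$ and $Z_t$ is the exponential martingale with integrand $v(t)$, then the process $\widehat W_1(t):=W_1(t)-\int_0^t v(s)\,\rmd s$ is a standard $\Pbb^{\vartheta,h}_{\rho,T}$-Brownian motion on $[0,T]$. Substituting the explicit form of $v(s)$ gives precisely $\widehat W_1(t)=W_1(t)-2\RE\int_0^t \rme^{\rmi\vartheta}h(s)\,\Tr\{(R_1+f_1(s))\rho_s\}\,\rmd s$ once one notes $\RE(\rme^{\rmi\vartheta}\overline{h(s)}\,z)=\RE(\rme^{\rmi\vartheta}h(s)\,\bar z)$ together with $\Tr\{(R_1+f_1(s))\rho_s\}$ being conjugated appropriately; a small check of conventions on $h$ versus $\overline h$ reconciles the two displayed forms. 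One must also check that the integrand $v(s)$ is adapted and sufficiently integrable — which follows since $R_1$ is bounded, $f_1\in L^\infty_{\mathrm{loc}}$, $\rho_s$ is a state, and $Z$ is a true martingale on $[0,T]$, so Novikov-type concerns are bypassed by directly using the martingale hypothesis of Theorem \ref{th2b}.

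The only subtlety I would spell out is the passage from the full filtration $(\Fscr_t)$ of Remark \ref{rem:Wiener}, on which the linear SME and $\sigma_t^{\Jcal}$ live, to the sub-filtration $(\Gscr_t)=\sigma(W_1(s):s\le t)$ on which $\sigma_t=\Ebb_\Qbb[\sigma_t^{\Jcal}\mid\Gscr_t]$ and the physical probability are defined. Since $\rho_t=\sigma_t/\Tr\{\sigma_t\}$ is $\Gscr_t$-measurable and the density $Z_T$ is $\Gscr_T$-measurable, and since under $\Qbb$ the component $W_1$ is itself a standard Brownian motion with respect to its own filtration $(\Gscr_t)$ (the other components being independent and thus irrelevant), Girsanov applies directly on $(\Omega,\Gscr_T,\Qbb)$ with Brownian motion $W_1$; no appeal to the larger filtration is needed for the conclusion, which is a statement about $\widehat W_1$ alone.

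The main obstacle, modest as it is, is bookkeeping: getting the real-part/complex-conjugate combination in $v(s)$ to match the stated drift exactly, and confirming that the conditional expectation in \eqref{y} genuinely produces an equation of the claimed closed form driven only by $\rmd W_1$ — this is already asserted in Theorem \ref{th2b}, so I would simply quote \eqref{SME2} and take its trace. Everything else is a direct application of the Doléans-Dade/Girsanov machinery once the martingale property $\Tr\{\sigma_t\}$ — established in Theorem \ref{th2b} — is in hand.
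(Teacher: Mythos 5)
Your proposal is correct and follows essentially the same route as the paper, which proves this proposition exactly by taking the trace of the linear SME \eqref{SME2} to identify $\Tr\{\sigma_t\}$ as the Dol\'eans-Dade exponential martingale serving as the density \eqref{xx} of $\Pbb^{\vartheta,h}_{\rho,T}$ with respect to $\Qbb$, and then applying Girsanov's theorem on the filtration generated by $W_1$. One minor slip: your reconciling identity should read $\RE\bigl(\rme^{\rmi\vartheta}\,\overline{h(s)}\,z\bigr)=\RE\bigl(\rme^{-\rmi\vartheta}\,h(s)\,\bar z\bigr)$ (the phase must be conjugated too); the residual $h$ versus $\overline{h}$ discrepancy is a notational inconsistency already present in the paper (compare \eqref{SME2} with \eqref{mean}), not a defect of your argument.
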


In other terms we can say that, under the physical probability, the instantaneous output $I(t) = \dot W_1(t)$ is the sum of a white noise $\rmd \widehat W _1(t)/\rmd t$ plus a regular signal $2\RE \rme^{\rmi \vartheta} h(t)\Tr\left\{ \bigl(R_1+f_1(t)\bigr) \rho_t\right\} $. Indeed the continuous measurement provides information also on the system $S$ and the output $I(t)$ is interpreted as an imprecise measure at time $t$ of the system observable $\rme^{\rmi \vartheta} h(t)\bigl(R_1+f_1(t)\bigr)+ \mathrm{h.c.}$. In general, white noise and signal turn out to be correlated. Let us stress that this result on the structure of the output is a byproduct of the stochastic representation of the continuous measurements. From this representation and Eq.\  \eqref{zz} the mean value of the observed quadrature at time $t$ is
\begin{align}\notag
\Tr &\left\{\bigl(\bbone_\Hscr\otimes Q(t;\vartheta,h)\bigr)\Sigma_f(t)\right\}=\int_\Omega W_1(t;\omega)\, \Pbb_{\rho,t}^{\vartheta,h}(\rmd \omega)\\ {}&=2\RE\int_0^t \rme^{\rmi \vartheta} h(s)\Tr\left\{ \bigl(R_1+f_1(s)\bigr) \eta_s\right\} \rmd s.\label{mean}
\end{align}

The full theory of the linear and nonlinear SMEs and their relations with the physical probability and the reference probability $\Qbb$ are presented in \cite[Secs.\ 3 and 5]{BarG09}.

\subsection{Quantum filtering}

Up to now we have described a continuous measurement of $S$ in its Schr\"odinger picture: first we have coupled $S$ with a bosonic field by the quantum SSE, then we have chosen some compatible field observable, finally we have deduced the evolution for the (unnormalized) a posteriori state $\sigma_t$ and for the observed output $I(t)$, or $W(t)$.

The same procedure in Heisenberg picture leads to the \emph{quantum filtering theory} \cite{Bel88,BouVHJ07,BelE08,BouH08,Bou08}. Let us quickly connect the two theories just for people that already know quantum filtering.

Let $\Bscr(\Hscr)$ denote the von Neumann algebra of the bounded operators on $\Hscr$. Then, in Heisenberg picture, the evolution of the system observables is given by the \emph{quantum stochastic flow} $j_t:\Bscr(\Hscr)\to\Bscr(\Hscr\otimes\Gamma)$,
\[
X\mapsto j_t(X)=U_t^\dagger(X\otimes\bbone_\Gamma)U_t.
\]
By using \eqref{Qout}, we get immediately $[j_t(X),Q^{\mathrm{out}}(s;\vartheta,h)]=0$ for all $0\leq s\leq t$ and for all $X\in \Bscr(\Hscr)$; this is the \emph{nondemolition property} of the quantum filtering theory, while the commutatation of the observed quadratures \eqref{snd} is called \emph{self-nondemolition property}. Thus, the measurement of an arbitrary system observable at time $t$ is compatible with our continuous measurement of the output quadratures in the time interval $[0,t]$. Therefore, we can consider $\mathscr{Q}_t=\operatorname{vN}(Q^{\mathrm{out}}(s;\vartheta,h)|0\leq s\leq t)$, the commutative von Neumann algebra generated by the observed quadratures in $[0,t]$ and  we can introduce the main object of quantum filtering: the \emph{conditional expectation} of $j_t(X)$ with respect to $\mathscr{Q}_t$
\[
\mu_t(X)=\Ebb[j_t(X)|\mathscr{Q}_t]\in\mathscr{Q}_t,
\]
defined by, $\forall Y\in\mathscr{Q}_t$,
\[
\Tr\{\mu_t(X)Y\left(\rho\otimes\varrho_\Gamma(f)\right)\}=\Tr\{j_t(X)Y\left(\rho\otimes\varrho_\Gamma(f)\right)\}.
\]
It is the observable giving the value to be expected for $j_t(X)$ at time $t$, on the basis of the results already observed for the field quadratures, in the case the global initial state were $\rho\otimes\varrho_\Gamma(f)$ and in the case also $j_t(X)$ were actually observed at time $t$. Note that it depends on the global evolution ($U_t$), on the initial states of both $S$ ($\rho$) and the field (through $f$) and, of course, on the observed quadratures (through $\vartheta$ and $h$).

The conditional expectation $\mu_t(X)$ is strictly related to the a posteriori states, as it is possible to prove that
\[
\mu_t(X)=U_t^\dagger\int_\Omega\Tr\{X\rho_t(\omega)\}E_\vartheta^h(\rmd\omega)U_t=\frac{\varpi_t(X)}{\varpi_t(\bbone_\Hscr)},
\]
where $E_\vartheta^h$ is the pvm introduced in Sec.\ \ref{QtoC} and
\[
\varpi_t(X)=U_t^\dagger\int_\Omega\Tr\{X\sigma_t(\omega)\}E_\vartheta^h(\rmd\omega)U_t
\]
is the unnormalized conditional expectation. Then the SMEs satisfied by the a posteriori states $\rho_t$ and $\sigma_t$ can be translated into the \emph{quantum filtering equations} satisfied by the conditional expectations $\mu_t(X)$ and $\varpi_t(X)$. These are quantum stochastic differential equations, respectively nonlinear and linear, which are driven by the commutative observables $Q^{\mathrm{out}}(t;\vartheta,h)$, so that, actually, they are classical stochastic differential equations by the spectral theorem. For example, the linear quantum filtering equation (the analog of the classical Zakai equation) is
\begin{multline*}
\rmd\varpi_t(X)=\varpi_t\big(\Lcal^*(t)[X]\big)\, \rmd t + \varpi_t\Big(X\rme^{\rmi \vartheta}\overline{h(t)} \bigl( R_1 +f_1(t)\bigr) \\
+ \rme^{-\rmi \vartheta}h(t) \left(R_1^\dagger +\overline{f_1(t)} \right) X\Big)\rmd Q^{\mathrm{out}}(t;\vartheta,h);
\end{multline*}
$\Lcal^*(t)$ is the adjoint of the Liouville operator.
Note that this equation depends on the global evolution ($U_t$), on the field initial state ($f$), on the observed quadratures ($\vartheta,h$), but not on the system state $\rho$, which instead gives the initial condition $\varpi_0(X)=\Tr\{X\rho\}\bbone_\Gamma$.

Another important concept of quantum filtering theory is the \emph{innovation process} $Z(t)$ of the observation, defined by
\[
Z(t)=Q^{\mathrm{out}}(t;\vartheta,h)-\int_0^t\mu_s\left(\rme^{\rmi \vartheta} h(s)\bigl(R_1+f_1(s)\bigr)+ \mathrm{h.c.}\right) \rmd s,\]
which is distributed as a Wiener process under the initial state $\rho\otimes\varrho_\Gamma(f)$. This is nothing but  our process $\widehat W_1(t)$, view as a family of operators on $\Hscr\otimes\Gamma$ and put in Heisenberg picture, i.e. $Z(t)= U_t^\dagger J^{-1}\widehat W_1(t) JU_t$.

\subsection{Characteristic functional and moments}\label{co,pm}

In the study of stochastic processes it is often useful to have explicit formulae for the moments, for instance for the second order moments, which determine the spectrum of the process; see Sec.\ \ref{sec:def_sp}. In the case of our output, the mean function is given by Eq.\ \eqref{mean}; to get the higher-order moments it is useful to introduce the \emph{characteristic functional}, which is the functional Fourier transform of the probability distribution of the process.

Let us denote by $\Ebb_{\rho,t}^{\vartheta,h}$ the expectation with respect to the physical probability
$\Pbb_{\rho,t}^{\vartheta,h}$.
By recalling that the output is represented by $W_1$, the characteristic functional up to time $t>0$ is
\begin{equation}\label{ch_fu}
\Phi_t(k;\vartheta,h) = \Ebb_{\rho,t}^{\vartheta,h}\bigg[\exp \biggl\{ \rmi  \int_0^t k(s)\, \rmd W_1(s)
\biggr\}\bigg];
\end{equation}
the argument $k$ is any real test function in $L^\infty(\Rbb_+)$.

By functional differentiation with respect to the test function one gets all the moments of the process, as done in Sec.\ \ref{subsec:om}. Moreover, when the characteristic functional is given, one gets the probabilities by anti-Fourier transform. For instance, the finite-dimensional probability densities of the increments $W_1(t_1)-W_1(t_0)$, $W_1(t_2)-W_1(t_1)$, \ldots, $W_1(t_n)-W_1(t_{n-1})$, with $0\leq t_0<t_1<\cdots   < t_n \leq t$, are given by
\[
\frac 1 {(2\pi)^n}\int_{\Rbb^{n}}\rmd \kappa_1\cdots \rmd \kappa_n  \biggl(\prod_{j=1}^n  \rme^{-\rmi\kappa_j \cdot
x_j}\biggr)\Phi_t(k;\vartheta,h) ,
\]
where we have  introduced the test function $k(s)= \sum_{j=1}^n 1_{(t_{j-1},t_j)}(s)
\,\kappa_j$.

\subsubsection{Characteristic operators}

As our measurement is an indirect observation of $S$ performed by a direct observation of $\Gamma$, the characteristic functional \eqref{ch_fu} can be expressed either in terms of the system $S$ only or in terms of the fields only. First, we want to express it in terms of the quantum observables \eqref{quadrature}.
We introduce the \emph{characteristic operator} $\widehat \Phi_t(k;\vartheta,h)$, the
Fourier transform of the pvm  $E_\vartheta^h$ \cite[Sec.\ 11.4.2]{GarZ00}, \cite[Sec.\ 3.2]{Bar06}:
\begin{equation}
\widehat \Phi_t(k;\vartheta,h) = \int_\Omega \exp \biggl\{ \rmi  \biggl(\int_0^t k(s)\, \rmd W_1(s)\biggr)(\omega)
\biggr\} E_\vartheta^h(\rmd \omega).\label{Phiincrements}
\end{equation}

By using the representation \eqref{pvm} of the pvm, the correspondence between $Q$ and $W_1$ \eqref{WSR}, and the definition of $Q$ \eqref{quadrature}, we get
\begin{align}\notag
\widehat \Phi_t(k;\vartheta,h) &= J^{-1}\exp \biggl\{ \rmi  \int_0^t k(s)\, \rmd W_1(s)\biggr\}J
\\  \notag
{}&=
\exp \biggl\{ \rmi  \int_0^t k(s)\, \rmd Q(s;\vartheta,h)
\biggr\}
\\ {}&=
\exp \biggl\{ \rmi  \rme^{-\rmi \vartheta}\int_0^t k(s) h(s)\, \rmd
B_1^\dagger(s) - \textrm{h.c.}
\biggr\}.\label{WQh}
\end{align}
By comparing this expression with Eq.\  \eqref{Wcal}, we see that the characteristic operator is the unitary Weyl operator
\[
\widehat \Phi_t(k;\vartheta,h)=\Wcal(q), \qquad  q_j(s) = \delta_{j1}\rmi \rme^{-\rmi \vartheta} k(s) h(s)1_{[0,t]}(s).
\]

Then, by using the expression \eqref{P(G)} of the physical probability, the characteristic functional can be written as
\begin{align}\notag
\Phi_t(k;\vartheta,h) &= \Tr \left\{\widehat\Phi_t(k;\vartheta,h)\Sigma_f(t)\right\}
\\ \notag {}= \Tr &\left\{\exp \biggl[ \rmi  \int_0^t k(s)\, \rmd
Q^{\mathrm{out}}(s;\vartheta,h) \biggr] \rho\otimes \varrho_\Gamma(f) \right\}
\\ {}&
 =\Tr_\Gamma \left\{\widehat\Phi_t(k;\vartheta,h)\Pi_f(t)\right\}.\label{characteristicf}
\end{align}
The last step is due to the fact that $\widehat\Phi_t(k;\vartheta,h)$ depends only on field operators; recall that $\Sigma_f(t)$ is the state of the total system $S$ plus fields, while $\Pi_f(t)$ is the reduced state of the fields \eqref{redst}.

Finally, let us define the  \emph{reduced characteristic operator} $\Gcal_t$ as the functional Fourier transform of the instrument \eqref{inst} \cite{BarLP82,BarL85,Bar06,BarG09}:
\begin{equation}\label{Gcal}
\Gcal_t(k;\vartheta,h) = \int_\Omega \exp \biggl\{ \rmi  \biggl(\int_0^t k(s)\, \rmd W_1(s)\biggr)(\omega)
\biggr\} \Ical_t(\rmd \omega).
\end{equation}
It can be shown that $\Gcal_t$ satisfies a closed differential equation, a kind of modification of the master equation \cite{BarL85}.
Then, by the representation \eqref{P+inst} of the physical probabilities, we get a further  expression of the characteristic functional: \begin{equation*}
\Phi_t(k;\vartheta,h) = \Tr_\Hscr\left\{\Gcal_t(k;\vartheta,h)[\rho]\right\}.
\end{equation*}

\subsubsection{The output moments}\label{subsec:om}
By functional differentiation of the characteristic functional, we get all the moments of the classical output
process. Let us introduce the formal time derivatives $I(t)=\dot W_1(t)$ and $\hat I(t)=\dot Q(t;\vartheta,h)$; from \eqref{WQh} and \eqref{characteristicf} we obtain immediately the expressions of mean function and autocorrelation function:
\begin{subequations}\label{moments}
\begin{align}\notag
\Ebb_{\rho,T}^{\vartheta,h}[I(t)]&= \Tr \left\{\dot Q(t;\vartheta,h)\Sigma_f(T)\right\}= \Tr_\Gamma \left\{\hat I(t)\Pi_f(T)\right\} \\
&=2\RE\left( \rme^{\rmi \vartheta}\, \overline{h(t)}\Tr_\Gamma \left\{ b_1(t)\,\Pi_f(T)\right\} \right),
\end{align}
\begin{align}\notag
\Ebb_{\rho,T}^{\vartheta,h}&[I(t)I(s)] = \Tr_\Gamma \left\{\hat I(t)\hat I(s)\Pi_f(T)\right\} \\  \notag
{}&=\delta(t-s)
+ 2\RE\Bigl(\overline{h(s)}\Tr_\Gamma\Bigl\{\Bigl( h(t)b_1^\dagger(t)\\ {}&+\rme^{2\rmi \vartheta} \, \overline{h(t)}\,b_1(t) \Bigr)b_1(s)\,\Pi_f(T)\Bigr\}\Bigr), \label{moment2}
\end{align}
\end{subequations}
where $T>t$, $T>s$. Analogous formulae hold for higher-order moments. Let us note that the order of the operators $\hat I(t)$ and $\hat I(s)$ in \eqref{moment2} does not matter, because they commute. Moreover, the moments of the classical process $I(t)$ are expressed in terms of quantum means and quantum correlations of the fields \cite[p.\ 165 and Sec.\ 11.3.2]{GarZ00}: $\Tr_\Gamma \left\{ b_1(t)\,\Pi_f(T)\right\} $,
$\Tr_\Gamma \left\{b_1^\dagger(t) b_1(s)\,\Pi_f(T)\right\} $, $\Tr_\Gamma \left\{b_1(t) b_1(s)\,\Pi_f(T)\right\} $,
and the complex conjugated expressions. The fields are all in normal order because we put in evidence the delta term coming out from a commutator.

By studying the properties of the reduced characteristic operator \eqref{Gcal}, it is possible to prove that all the moments of our classical output can be expressed by means of quantities concerning only system $S$ \cite{Bar06}.  For the mean and autocorrelation functions the final result is \cite[Secs.\ 3.3, 3.5]{Bar06}
\begin{subequations}\label{1moment}
\begin{equation}
\Ebb_{\rho,T}^{\vartheta,h}[I(t)]=2\RE\left(\Tr_\Hscr\left\{Z(t)\eta_t\right\}\right),
\end{equation}
\begin{align}\notag
\Ebb_{\rho,T}^{\vartheta,h}[I(t)I(s)] &= \delta(t-s)  +2\RE\bigl(\Tr_\Hscr\bigl\{Z(t_2) \\ {}&\times \Upsilon(t_2,t_1)\left[Z(t_1)\eta_{t_1}+\eta_{t_1}
Z(t_1)^\dagger\right] \bigr\}\bigr),
\end{align}
\end{subequations}
where $t_2=t\vee s$, $t_1=t\wedge s$ and
\begin{equation*}
Z(t) := \rme^{\rmi\vartheta}\, \overline{h(t)}\,\left(R_1+f_1(t)\right).
\end{equation*}
These expressions are more useful for computations, while the expressions \eqref{moments} are better suited for theoretical considerations; cf.\ \cite[Sec.\ 5.4.6]{GarZ00}.

\section{The spectrum of the output}\label{sec:spectrum}
Inside the theory of continuous measurements, the output of the measurement is a classical stochastic process, even if its distribution is determined by quantum mechanics; so, the spectrum of the output can be introduced by using the classical definition of spectrum of a stochastic process \cite{BarGL08,BarG08,BarGL09}.

Let us stress the importance of this approach. As we describe the output by a classical stochastic process, we can define its spectrum by a classical definition, which is clearly related to the measurement performed in the real lab. Nevertheless, since the spectrum depends on the distribution of the output, and since this last is determined by our fully quantum model, we can clearly relate the classical properties of the spectrum with the quantum properties of systems $S$ and $\Gamma$.

\subsection{The spectrum of a stationary process}\label{sec:def_sp}
In the classical theory of stochastic processes, the spectrum is related to the Fourier
transform of the autocorrelation function \cite{How02}. Let $Y$ be a stationary real stochastic process
with finite moments; then, the mean is independent of time $ \Ebb[Y(t)]=\Ebb[Y(0)]=:m_Y$, $\forall
t\in \Rbb$, and the second moment is invariant under time translations: $ \forall t,s\in \Rbb$,
\begin{equation}\label{2_st}
\Ebb[Y(t)Y(s)]=\Ebb[Y(t-s)Y(0)]=:R_Y(t-s).
\end{equation}
The function $R_Y(t)$, $t\in \Rbb$, is called the \emph{autocorrelation function} of
the process. Obviously, we have $ \Cov\left[Y(t),Y(s)\right]=R_Y(t-s)-m_Y^{\,2}$.

The \emph{spectrum} of the stationary stochastic process $Y$ is the Fourier transform of its
autocorrelation function:
\begin{equation}
S_Y(\mu):=\int_{-\infty}^{+\infty}\rme^{\rmi \mu t}R_Y(t)\, \rmd t.
\end{equation}
This formula has to be intended in the sense of distributions. If
$\Cov\left[Y(t),Y(0)\right]\in L^1(\Rbb)$, we can write
\begin{equation}
S_Y(\mu):=2\pi m_Y^{\,2} \delta(\mu)+\int_{-\infty}^{+\infty}\rme^{\rmi \mu
t}\Cov\left[Y(t),Y(0)\right] \rmd t.
\end{equation}
By the properties of the covariance, the function $\Cov\left[Y(t),Y(0)\right]$ is positive
definite and, by the properties of positive definite functions, this implies
$\int_{-\infty}^{+\infty}\rme^{\rmi \mu t}\Cov\left[Y(t),Y(0)\right] \rmd t\geq 0$;
then, also $S_Y(\mu)\geq 0$.

By using the stationarity and some tricks on multiple integrals, one can check that an
alternative expression of the spectrum is
\begin{equation}\label{eq:sp2}
S_Y(\mu)= \lim_{T\to + \infty}\frac 1 T \Ebb\left[\abs{\int_0^T  \rme^{\rmi \mu t}Y(t)\,\rmd t
}^2\right].
\end{equation}
The advantage now is that the positivity of the spectrum appears explicitly and only positive times are involved \cite{How02}.
Expression \eqref{eq:sp2} can be generalized also to processes which are stationary only in
some asymptotic sense and to singular processes as our output current $I(t)$.

\subsection{The spectrum of the output in a finite time horizon}
Let us consider our output $I(t)= \rmd W_1(t)/\rmd t$ under the physical probability
$\Pbb_{\rho,T}^{\vartheta,h}$. We call ``spectrum up to time $T$'' of $I(t)$ the quantity
\begin{equation}\label{eq:spI(t)}
S_T(\mu;\vartheta)= \frac 1 T \Ebb_{\rho,T}^{\vartheta,h}\left[\abs{\int_0^T  \rme^{\rmi \mu
t}\,\rmd W_1(t) }^2\right].
\end{equation}
Note that the spectrum is an even function of $\mu$: $S_T(\mu;\vartheta)=S_T(-\mu;\vartheta)$. When the limit $T\to +\infty$ exists, we can speak of \emph{spectrum of the output,} but this existence depends on the specific properties of the concrete model.

By writing the second moment defining the spectrum as the square of the mean plus the
variance, the spectrum splits in an elastic or coherent part and in an inelastic or incoherent
one:
\begin{subequations}
\begin{equation}
S_T(\mu;\vartheta)=
S_T^{\mathrm{el}}(\mu;\vartheta)+S_T^{\mathrm{inel}}(\mu;\vartheta),
\end{equation}
\begin{equation}
S_T^{\mathrm{el}}(\mu;\vartheta)= \frac 1 T \abs{\Ebb_{\rho,T}^{\vartheta,h}\left[\int_0^T
\rme^{\rmi \mu t}\,\rmd W_1(t)\right] }^2,
\end{equation}
\begin{align}\notag
S_T^{\mathrm{inel}}(\mu;\vartheta)&= \frac 1 T \Var_{\rho,T}^{\vartheta,h}\left[\int_0^T
\cos\mu t\,\rmd W_1(t) \right]\\ {}& + \frac 1 T\Var_{\rho,T}^{\vartheta,h}\left[\int_0^T \sin\mu
t\,\rmd W_1(t) \right] .
\end{align}
\end{subequations}
Let us note that
\begin{equation}
S_T^{\mathrm{el}}(\mu;\vartheta)=S_T^{\mathrm{el}}(-\mu;\vartheta), \quad
S_T^{\mathrm{inel}}(\mu;\vartheta)=S_T^{\mathrm{inel}}(-\mu;\vartheta).
\end{equation}

In Sec.\ \ref{co,pm} we have seen two ways of expressing the output moments, by means of field operators or by means of system operators.

By using the expression \eqref{moment2} for the autocorrelation function of the output, we obtain
\begin{align}\notag
&S_T(\mu;\vartheta)= 1+\frac 2 T\int_0^T \rmd t \int_0^T \rmd s\; \rme^{\rmi \mu
\left(t- s\right)}\,
\RE\Bigl(\overline{h(s)} \\ {}&\times\Tr_\Gamma\left\{\left( h(t)b_1^\dagger(t)+\rme^{2\rmi \vartheta} \, \overline{h(t)}\,b_1(t) \right)b_1(s)\,\Pi_f(T)\right\}\Bigr).\label{b:spI(t)}
\end{align}
Let us note that Eq.\ \eqref{b:spI(t)} expresses the spectrum as a Fourier transform (in a finite time interval) of a normal ordered quantum correlation function of the field; cf.\  \cite[Sec.\ 9.3.2]{Car08}. Let us recall that the function $h(t)$ has modulus one and represents the phase contribution coming from the interference with the wave of the local oscillator.

By using the expressions \eqref{1moment} for the first two moments we get the spectrum in a
form which involves only system operators:
\begin{subequations}\label{spectrumS}
\begin{align}\notag
&S_T^{\mathrm{el}}(\mu;\vartheta)= \frac 1 T \abs{\int_0^T \rme^{\rmi \mu t}\Tr_\Hscr \left\{
\left(Z(t)+Z(t)^\dagger\right) \eta_t\right\}\rmd t}^2
\\
{}&=\frac 4 T \abs{\int_0^T \rme^{\rmi \mu t}\RE\left[\rme^{\rmi \vartheta}\, \overline{h(t)}\bigr(\Tr_\Hscr \left\{
R_1 \eta_t\right\}+f_1(t)\bigr)\right]\rmd t}^2,
\end{align}
\begin{align}\notag
&S_T^{\mathrm{inel}}(\mu;\vartheta) = 1  +\frac 2 T \int_0^T\rmd t  \int_0^t \rmd s
\,\cos\mu(t-s)
\\ {}&\times \Tr_\Hscr\left\{\left(\tilde Z(t)+\tilde Z(t)^\dagger\right)\Upsilon(t,s)\left[\tilde
Z(s)\eta_s+\eta_s \tilde Z(s)^\dagger\right] \right\},
\end{align}
\end{subequations}
where $\Upsilon(t,s)$ is the propagator \eqref{propagator} of the reduced dymamics and
\begin{equation*}
\tilde Z(t) = Z(t) - \Tr_\Hscr\left\{Z(t) \eta_t\right\}=\rme^{\rmi \vartheta}\, \overline{h(t)}\bigr(R_1-\Tr_\Hscr \left\{
R_1 \eta_t\right\}\bigr).
\end{equation*}

\subsection{Properties of the spectrum and the Heisenberg uncertainty relations}

Equations \eqref{spectrumS} give the spectrum in terms of the reduced description of system
$S$ (the fields are traced out); this is useful for concrete computations. But the general
properties of the spectrum are more easily obtained by working with the fields; so, here we
trace out system $S$ and we start from expression \eqref{b:spI(t)}.

Let us define the field operators
\begin{subequations}
\begin{gather}\label{Qmu}
Q_T(\mu;\vartheta)= \frac 1 {\sqrt{T}}\int_0^T\rme^{\rmi \mu t}\,\rmd Q(t;\vartheta,h),
\\
\tilde Q_T(\mu;\vartheta)= Q_T(\mu;\vartheta)-\Tr_\Gamma\left\{\Pi_f(T)Q_T(\mu;\vartheta)
\right\};
\end{gather}
\end{subequations}
the local oscillator wave $h$ is fixed.
Let us stress that $Q_T(\mu;\vartheta)$ commutes with its adjoint and that
$Q_T(\mu;\vartheta)^\dagger=Q_T(-\mu;\vartheta)$. By using Eqs.\ \eqref{characteristicf} and
\eqref{moments} and taking first the trace over $\Hscr$, we get
\begin{subequations}\label{eq:sp2I(t)}
\begin{gather}\label{eq:tot}
S_T(\mu;\vartheta)= \Tr_\Gamma\left\{\Pi_f(T)
Q_T(\mu;\vartheta)^\dagger Q_T(\mu;\vartheta)\right\}\geq 0,
\\ \label{eq:el}
S_T^{\mathrm{el}}(\mu;\vartheta)=  \abs{ \Tr_\Gamma\left\{\Pi_f(T)
Q_T(\mu;\vartheta)\right\}}^2\geq 0,
\\
S_T^{\mathrm{inel}}(\mu;\vartheta)=\Tr_\Gamma\left\{\Pi_f(T) \tilde
Q_T(\mu;\vartheta)^\dagger\tilde Q_T(\mu;\vartheta)\right\}\geq 0.
\end{gather}
\end{subequations}

\subsubsection{Spectrum and field modes}
To elaborate the previous expressions it is useful to introduce annihilation and creation
operators for bosonic temporal modes, as in Sec.\ \ref{TmWo}:
\begin{equation}\label{a(mu)}
a_T(\mu):= \frac 1 {\sqrt{T}} \int_0^T \rme^{\rmi \mu t} \overline{h(t)}\, \rmd B_1(t)\equiv c_1(g_T^\mu),
\end{equation}
\begin{equation}
g_T^\mu(t):=  \frac {\rme^{-\rmi \mu t}} {\sqrt{T}}\,h(t) 1_{[0,T]}(t).
\end{equation}
The operators $a_T(\mu)$, $a_T^\dagger(\mu)$ are true bosonic modes, as they satisfy the CCR
\begin{subequations}\label{muCCR}
\begin{gather}
[a_T(\mu), a_T(\mu^\prime)]= [a_T^\dagger(\mu), a_T^\dagger(\mu^\prime)]=0,
\\
[a_T(\mu), a_T^\dagger(\mu)]=  1 .
\end{gather}
\end{subequations}
However, for finite $T$ these modes are only approximately orthogonal, as we get
\begin{equation}\label{approxcomm}
[a_T(\mu), a_T^\dagger(\mu^\prime)]=
\frac{\rme^{\rmi (\mu -\mu^\prime)T}-1}{ \rmi  (\mu -\mu^\prime)T} \quad \text{for } \
\mu^\prime\neq \mu.
\end{equation}

Then, from Eqs.\ \eqref{quadrature}, \eqref{Qmu}, \eqref{eq:tot} we have easily
\begin{equation}\label{twomodequadrature}
Q_T(\mu;\vartheta)=\rme^{\rmi \vartheta} a_T(\mu)+\rme^{-\rmi \vartheta}
a_T^\dagger(-\mu),
\end{equation}
\begin{align}\notag
S_T(\mu;\vartheta)&= \Tr_\Gamma\Bigl\{\left( \rme^{-\rmi \vartheta}
a_T^\dagger(-\mu) + \rme^{\rmi \vartheta} a_T(\mu) \right)\\ {}&\times \Pi_f(T)\left(
\rme^{-\rmi\vartheta} a_T^\dagger(\mu) +\rme^{\rmi \vartheta} a_T(-\mu) \right) \Bigr\}.
\label{Saadagger}\end{align}
Let us stress that only two field modes contribute to the spectrum for $\mu\neq 0$, and only one mode in the case of $\mu=0$.
By using the CCR \eqref{muCCR} we get the normal ordered version of \eqref{Saadagger}:
\begin{align}\notag
S_T&(\mu;\vartheta)=1 \\ \notag{}&+ \Tr_\Gamma\Bigl\{\Pi_f(T) \Bigl(a_T^\dagger(\mu)a_T(\mu)
+ a_T^\dagger(-\mu)  a_T(-\mu)
\\ {}&+\rme^{-2\rmi \vartheta}
a_T^\dagger(\mu) a_T^\dagger(-\mu) +\rme^{2\rmi \vartheta}
a_T(-\mu) a_T(\mu)\Bigr)\Bigr\}.
\label{Sadaggera}\end{align}
Note that Eq.\ \eqref{approxcomm} played no role in the normal ordering operation.

By Eqs.\ \eqref{eq:el} and \eqref{twomodequadrature} we get for the elastic part of the spectrum
\begin{align}\notag
S_T^{\mathrm{el}}(\mu;\vartheta)&=\Big|\Tr_\Gamma\left\{\Pi_f(T)
a_T(\mu)\right\} \\ {}&+ \rme^{-2\rmi \vartheta}\Tr_\Gamma\left\{\Pi_f(T)
a_T^\dagger(-\mu)\right\}\Big|^2.\label{elS}
\end{align}

To obtain a similar expression also for the inelastic part, it is convenient to introduce the operators
\begin{equation}
\tilde a_T(\mu) := a_T(\mu) - \Tr_\Gamma \left\{ \Pi_f(T)a_T(\mu)\right\},
\end{equation}
which satisfy the same commutation relations \eqref{muCCR}, \eqref{approxcomm} as the operators $ a_T(\mu)$ and their adjoint.
Then, we get
\begin{equation}
\tilde Q_T(\mu;\vartheta)=\rme^{\rmi \vartheta} \tilde a_T(\mu)+\rme^{-\rmi \vartheta}
\tilde a_T^\dagger(-\mu),
\end{equation}
\begin{align}\notag
S_T^{\mathrm{inel}}&(\mu;\vartheta)=1 \\ \notag{}&+ \Tr_\Gamma\Bigl\{\Pi_f(T) \Bigl(\tilde a_T^\dagger(\mu)\tilde a_T(\mu)
+\tilde a_T^\dagger(-\mu) \tilde a_T(-\mu)\\ {}&+\rme^{-2\rmi \vartheta}
\tilde a_T^\dagger(\mu)\tilde a_T^\dagger(-\mu) +\rme^{2\rmi \vartheta}
\tilde a_T(-\mu)\tilde a_T(\mu)\Bigr)\Bigr\}.\label{thetadependence}
\end{align}

\subsubsection{Spectra of complementary quadratures}
Let us consider two choices of the phase $\vartheta$: $\vartheta$ and $\vartheta\pm \pi/2$. From Eq.\ \eqref{2quadrature} we get \[
[Q(t;\vartheta,h),Q(s;\vartheta\pm \pi/2,h)]=\mp 2\rmi \left(t\wedge s\right), \]
which means that we are considering two incompatible field quadratures, measured by two different setups. Indeed, the value of $\vartheta$ can be changed by changing the optical paths of the emited light and local oscillator.
For these quadratures  we have the important bounds and relations given in the following theorem.

\begin{theorem}\label{Th1}
 For every $\vartheta$ and $\mu$ we have the following relations:
\begin{subequations}\label{=hetero}
\begin{align}\notag
&\frac 1 2 \bigl(S_T^{\mathrm{el}}(\mu;\vartheta)+S_T^{\mathrm{el}}(\mu;\vartheta\pm {\textstyle \frac
\pi 2})\bigr) \\ {}&={}\abs{\Tr_\Gamma\left\{\Pi_f(T)
a_T(\mu)\right\}}^2 + \abs{\Tr_\Gamma\left\{\Pi_f(T)
a_T(-\mu)\right\}}^2,
\end{align}
\begin{align}\notag
\frac 1 2 &\bigl(S_T^{\mathrm{inel}}(\mu;\vartheta)+S_T^{\mathrm{inel}}(\mu;\vartheta\pm {\textstyle \frac
\pi 2})\bigr)=1  \\ {}&+ \Tr_\Gamma\Bigl\{\Pi_f(T) \Bigl(\tilde a_T^\dagger(\mu)\tilde a_T(\mu)
+\tilde a_T^\dagger(-\mu)\tilde a_T(-\mu)\Bigl) \Bigr\},\label{thetaindependence}
\end{align}
\end{subequations}
\begin{align}\notag
&\sqrt{S_T^{\mathrm{inel}}(\mu;\vartheta)S_T^{\mathrm{inel}}(\mu;\vartheta\pm {\textstyle
\frac\pi 2})}\geq 1
\\  \label{semibound}
{}&+ \abs{\Tr_\Gamma\left\{\Pi_f(T) \left( \tilde a_T^\dagger(\mu)\tilde a_T(\mu)
-\tilde a_T^\dagger(-\mu)\tilde a_T(-\mu)\right)\right\}}.
\end{align}
Then, independently of the system state $\rho$, of the field state $\varrho_\Gamma(f)$, of the function $h$ and of the
Hudson-Parthasarathy evolution $U$, the following bounds hold:
\begin{equation}\label{Heisenberg}
S_T^{\mathrm{inel}}(\mu;\vartheta)S_T^{\mathrm{inel}}(\mu;\vartheta\pm {\textstyle \frac
\pi 2}) \geq 1,
\end{equation}
\begin{equation}\label{arithbound}
\frac 1 2 \left(S_T^{\mathrm{inel}}(\mu;\vartheta) + S_T^{\mathrm{inel}}(\mu;\vartheta\pm
{\textstyle \frac\pi 2})\right) \geq 1.
\end{equation}
\end{theorem}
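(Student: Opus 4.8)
The plan is to reduce every assertion to elementary manipulations of the two field modes $a_T(\pm\mu)$ and their centred versions $\tilde a_T(\pm\mu)$, starting from the normal-ordered expressions \eqref{elS} for $S_T^{\mathrm{el}}$ and \eqref{thetadependence} for $S_T^{\mathrm{inel}}$. I would introduce the shorthand $n_\pm:=\Tr_\Gamma\{\Pi_f(T)\,\tilde a_T^\dagger(\pm\mu)\tilde a_T(\pm\mu)\}\ge 0$, $m:=\Tr_\Gamma\{\Pi_f(T)\,\tilde a_T(-\mu)\tilde a_T(\mu)\}\in\Cbb$, and $\alpha_\pm:=\Tr_\Gamma\{\Pi_f(T)\,a_T(\pm\mu)\}$. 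Since $\rme^{2\rmi(\vartheta\pm\pi/2)}=-\rme^{2\rmi\vartheta}$, equation \eqref{thetadependence} becomes $S_T^{\mathrm{inel}}(\mu;\vartheta)=1+n_++n_-+2\RE(\rme^{2\rmi\vartheta}m)$ and $S_T^{\mathrm{inel}}(\mu;\vartheta\pm\pi/2)=1+n_++n_--2\RE(\rme^{2\rmi\vartheta}m)$; half of their sum is $1+n_++n_-$, which is exactly \eqref{thetaindependence}. The analogous sign flip in \eqref{elS}, combined with the parallelogram identity $\abs{z+w}^2+\abs{z-w}^2=2\abs{z}^2+2\abs{w}^2$ applied with $z=\alpha_+$ and $w=\rme^{-2\rmi\vartheta}\overline{\alpha_-}$, gives the first identity of \eqref{=hetero}, because $\Tr_\Gamma\{\Pi_f(T)a_T^\dagger(-\mu)\}=\overline{\alpha_-}$ and $\abs{w}=\abs{\alpha_-}$.

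The heart of the proof is the lower bound \eqref{semibound}. Set $A:=1+n_++n_-$ and $B:=2\RE(\rme^{2\rmi\vartheta}m)$, so the product appearing in \eqref{semibound} equals $(A+B)(A-B)=A^2-B^2$ and the task is to bound $B^2$ from above. Since $B^2\le 4\abs{m}^2$, it suffices to estimate $\abs{m}^2$, and here I would use the Cauchy--Schwarz inequality for the positive sesquilinear form $(X,Y)\mapsto\Tr_\Gamma\{\Pi_f(T)X^\dagger Y\}$. Taking $X=\tilde a_T^\dagger(-\mu)$, $Y=\tilde a_T(\mu)$ and using $[\tilde a_T(-\mu),\tilde a_T^\dagger(-\mu)]=1$ from \eqref{muCCR} gives $\abs{m}^2\le(n_-+1)\,n_+$; using instead that $\tilde a_T(\mu)$ and $\tilde a_T(-\mu)$ commute (again \eqref{muCCR}) and choosing $X=\tilde a_T^\dagger(\mu)$, $Y=\tilde a_T(-\mu)$ gives $\abs{m}^2\le(n_++1)\,n_-$, hence $\abs{m}^2\le\min(n_+,n_-)\bigl(1+\max(n_+,n_-)\bigr)$. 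A short algebraic identity then shows $A^2-4\min(n_+,n_-)\bigl(1+\max(n_+,n_-)\bigr)=\bigl(1+\abs{n_+-n_-}\bigr)^2$, so $A^2-B^2\ge\bigl(1+\abs{n_+-n_-}\bigr)^2$; taking the square root and noting $n_+-n_-=\Tr_\Gamma\{\Pi_f(T)(\tilde a_T^\dagger(\mu)\tilde a_T(\mu)-\tilde a_T^\dagger(-\mu)\tilde a_T(-\mu))\}$ yields \eqref{semibound}. Finally, \eqref{Heisenberg} follows from \eqref{semibound} because $(1+\abs{n_+-n_-})^2\ge 1$, and \eqref{arithbound} follows either from the arithmetic--geometric mean inequality applied to \eqref{Heisenberg} or, more directly, from \eqref{thetaindependence} together with $n_\pm\ge 0$.

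The step I expect to be delicate is the bound on $\abs{m}^2$: for finite $T$ the modes $a_T(\mu)$ and $a_T(-\mu)$ are \emph{not} orthogonal, their cross-commutator \eqref{approxcomm} being nonzero for $\mu\ne 0$, so one cannot simply quote a two-mode vacuum-type estimate. What rescues the argument is that only the diagonal commutators $[\tilde a_T(\pm\mu),\tilde a_T^\dagger(\pm\mu)]=1$ and the vanishing commutator $[\tilde a_T(\mu),\tilde a_T(-\mu)]=0$ enter the Cauchy--Schwarz steps, so \eqref{approxcomm} never intervenes, as already noted after Eq.\ \eqref{Sadaggera}. One should also treat the degenerate case $\mu=0$ separately: there $\tilde a_T(\mu)=\tilde a_T(-\mu)$, $n_+=n_-=:n$, and the same Cauchy--Schwarz argument gives $\abs{m}^2\le n(1+n)$, so all the inequalities of the theorem remain valid, with the left side of \eqref{semibound} reducing to $S_T^{\mathrm{inel}}(0;\vartheta)\,S_T^{\mathrm{inel}}(0;\vartheta\pm\pi/2)\ge 1$.
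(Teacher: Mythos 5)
Your proof is correct, and its key step differs from the paper's. The identities \eqref{=hetero} and the bound \eqref{arithbound} you obtain exactly as the paper does (they follow at once from \eqref{elS} and \eqref{thetadependence}, the $\vartheta$-dependent terms flipping sign under $\vartheta\mapsto\vartheta\pm\pi/2$), and \eqref{Heisenberg} is read off \eqref{semibound} in both treatments. The difference is in \eqref{semibound}: the paper invokes a Robertson-type uncertainty argument, namely the positivity of the $2\times2$ matrix $\Tr\{X_i\,\Pi_f(T)\,X_j^\dagger\}$ with $X_1,X_2$ the two complementary quadrature combinations $\rme^{-\rmi\vartheta}\tilde a_T^\dagger(-\mu)+\rme^{\rmi\vartheta}\tilde a_T(\mu)$ and its $\vartheta\pm\pi/2$ counterpart, lower-bounds the product of the two inelastic spectra by $\bigl|\IM\Tr\{X_1\Pi_f(T)X_2^\dagger\}\bigr|^2$, i.e.\ by the expectation of a commutator, and then needs the evenness of the spectrum in $\mu$ to combine the bounds $\abs{1\pm(n_+-n_-)}^2$ into $\bigl(1+\abs{n_+-n_-}\bigr)^2$. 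You instead write the two spectra as $A\pm B$ with $A=1+n_++n_-$, $B=2\RE(\rme^{2\rmi\vartheta}m)$, bound the anomalous correlator by the two Cauchy--Schwarz estimates $\abs{m}^2\le(n_-+1)n_+$ and $\abs{m}^2\le(n_++1)n_-$ (where the CCR enter only through normal ordering of a single mode, so \eqref{approxcomm} is indeed irrelevant, as in the paper), and close with the algebraic identity $A^2-4\min(n_+,n_-)\bigl(1+\max(n_+,n_-)\bigr)=\bigl(1+\abs{n_+-n_-}\bigr)^2$, which I checked; this dispenses with the $\mu\to-\mu$ symmetrization and even gives the marginally sharper intermediate bound $\sqrt{A^2-4\abs{m}^2}$. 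Both routes rest on the same positivity (Cauchy--Schwarz for the state $\Pi_f(T)$) and single-mode CCR, but the operators fed into it and the mechanism by which the ``$+1$'' appears are genuinely different: in the paper it comes from a commutator of quadratures, in yours from $\tilde a\tilde a^\dagger=1+\tilde a^\dagger\tilde a$ inside the Schwarz bound. Your separate remarks on $\mu=0$ and on the non-orthogonality of the modes are careful and consistent with the paper's observation after Eq.\ \eqref{Sadaggera}.
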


\begin{proof}
First of all from Eqs.\ \eqref{elS}, \eqref{thetadependence} we get Eqs.\ \eqref{=hetero}.
Then, the bound \eqref{arithbound} comes immediately from Eq.\ \eqref{thetaindependence}.

The bound \eqref{Heisenberg} is a trivial consequence of Eq.\ \eqref{semibound}.

To prove the bound \eqref{semibound}, we write
\begin{align*}
S_T^{\mathrm{inel}}(\mu;\vartheta)&= \Tr_\Gamma\Bigl\{\left( \rme^{-\rmi \vartheta}
\tilde a_T^\dagger(-\mu) + \rme^{\rmi \vartheta} \tilde a_T(\mu) \right)\\ {}&\times \Pi_f(T)\left(
\rme^{-\rmi\vartheta} \tilde a_T^\dagger(\mu) +\rme^{\rmi \vartheta} \tilde a_T(-\mu) \right) \Bigr\}.
\end{align*}

The usual tricks to derive the Heisenberg-Scr\"odinger-Robertson uncertainty relations can be
generalized also to non-selfadjoint operators \cite{Hol01}. For any choice of the state
$\varrho$ and of the operators $X_1$, $X_2$ (with finite second moments with respect to
$\varrho$) the $2\times 2$ matrix with elements $ \Tr\left\{X_i\varrho X_j^\dagger\right\}$ is
positive definite and, in particular, its determinant is not negative. Then, we have
\begin{align*}
\Tr&\left\{X_1\varrho X_1^\dagger\right\}\Tr\left\{X_2\varrho X_2^\dagger\right\}\geq
\abs{\Tr\left\{X_1\varrho X_2^\dagger\right\}}^2 \\ {}&\geq \abs{\IM\Tr\left\{X_1\varrho
X_2^\dagger\right\}}^2= \frac 1 4 \abs{\Tr\left\{\varrho\left( X_2^\dagger X_1-X_1^\dagger X_2\right)\right\}}^2.
\end{align*}
By taking  $\varrho =\Pi_f(T)$, \[
X_1=\rme^{-\rmi \vartheta} \tilde a_T^\dagger(-\mu)
+\rme^{\rmi \vartheta} \tilde a_T(\mu), \]
\[
X_2=\mp \rmi \rme^{-\rmi \vartheta}
\tilde a_T^\dagger(-\mu) \pm\rmi\rme^{\rmi \vartheta} \tilde a_T(\mu), \]
we get
\begin{align*}
&S_T^{\mathrm{inel}}(\mu;\vartheta)S_T^{\mathrm{inel}}(\mu;\vartheta\pm {\textstyle \frac
\pi 2})
\\ {}&\geq \abs{1 + \Tr_\Gamma\left\{\Pi_f(T) \left( \tilde a_T^\dagger(-\mu)\tilde a_T(-\mu)
-\tilde a_T^\dagger(\mu)\tilde a_T(\mu)\right)\right\}}^2.
\end{align*}
But we can change $\mu$ in $-\mu$ and we have also
\begin{align*}
&S_T^{\mathrm{inel}}(\mu;\vartheta)S_T^{\mathrm{inel}}(\mu;\vartheta\pm {\textstyle
\frac\pi 2}) =S_T^{\mathrm{inel}}(-\mu;\vartheta)S_T^{\mathrm{inel}}(-\mu;\vartheta\pm
{\textstyle \frac\pi 2})
\\ {}&\geq \abs{1 + \Tr_\Gamma\left\{\Pi_f(T) \left( \tilde a_T^\dagger(\mu)\tilde a_T(\mu)
-\tilde a_T^\dagger(-\mu)\tilde a_T(-\mu)\right)\right\}}^2.
\end{align*}
The two inequalities together give the final result \eqref{semibound}.
\end{proof}

Equations  \eqref{=hetero} express the independence from $\vartheta$ of the arithmetic mean, in both cases of elastic and inelastic spectra. Equation  \eqref{semibound} is a bound of Robertson type; such a bound does not depend on $\vartheta$, but it is still dependent on the initial state and on the dynamics. The
Heisenberg-type relation \eqref{Heisenberg} and the bound \eqref{arithbound} are fully independent of the initial state and of the dynamics.

One speaks of \emph{squeezed field} or of the \emph{spectrum of squeezing} \cite[Sec.\ 9.3.2]{Car08} if, at least in a region of the $\mu$ line, for
some $\vartheta$ one has $S_T^{\mathrm{inel}}(\mu;\vartheta)<1$. If this happens, the
bounds \eqref{Heisenberg} and \eqref{arithbound} say that necessarily
$S_T^{\mathrm{inel}}(\mu;\vartheta+\frac \pi 2)>1$ in such a way that the product and the arithmetic mean are
bigger than one. Note that, with our choice of the environment initial state, any possible squeezing can be imputed to the interaction with $S$. Indeed, in the case of no interaction, the output $W_1$ is a Wiener process plus a deterministic drift and $S_T^{\rm inel} (\mu,\vartheta)\equiv 1$.

\section{Homodyning versus heterodyning of the fluorescence light of a two-level atom}\label{sec:model}
Let us take as system $S$ a two-level atom, which means $\Hscr=\Cbb^2$, $H_0=\frac
{\nu_0} 2\, \sigma_z$; $\nu_0>0$ is the \emph{resonance frequency} of the atom. We
denote by $\sigma_-$ and $\sigma_+$ the lowering and rising operators and by
$\sigma_x=\sigma_-+\sigma_+$, $\sigma_y=\rmi(\sigma_--\sigma_+)$, $\sigma_z=\sigma_+\sigma_- -
\sigma_-\sigma_+$ the Pauli matrices; we set also \[ \sigma_\vartheta =
\rme^{\rmi\vartheta}\,\sigma_- + \rme^{-\rmi\vartheta}\,\sigma_+ , \qquad P_\pm= \sigma_\pm \sigma_\mp. \]
The atom can absorb and emit light and it is stimulated by a laser; some thermal environment can be present too. The
quantum fields $\Gamma$ model the whole environment.

\paragraph{The absorption/emission terms.} The electromagnetic field is split in two
fields, according to the direction of propagation: one field for the photons in the forward
direction ($k=2$), that of the stimulating laser and of the lost light, one field for the
photons collected by the detector ($k=1$). In the rotating wave approximation we can take
\[
R_1=\sqrt{\gamma p}\, \sigma_-\,, \qquad R_2=\sqrt{\gamma
(1-p)}\, \sigma_-\,.
\]
The coefficient $\gamma>0$ is the natural \emph{line-width} of the atom, $p$ is the fraction
of the detected fluorescence light and $1-p$ is the fraction of the lost light
($0<p<1$) \cite{Car93,GarZ00,Bar06,BarGL09}.

\paragraph{Other dissipation terms.} We introduce also the interaction with a thermal
bath,
\[
R_3=\sqrt{\gamma \overline{n}}\, \sigma_-\,, \qquad R_4=\sqrt{\gamma \overline{n}}\,
\sigma_+\,, \qquad \overline{n}\geq 0,
\]
and a term responsible of \emph{dephasing} (or decoherence),
\[
R_5=\sqrt{\gamma k_d}\, \sigma_z\,, \qquad k_d\geq 0.
\]

\paragraph{The laser wave.} We consider a perfectly coherent monochromatic laser of frequency $\nu>0$:
\begin{equation}\label{laserw}
f_k(t)=
\delta_{k2}\, \frac{\rmi \Omega} {2\sqrt{\gamma (1-p)}}\, \rme^{-\rmi \nu t}1_{[0,T]}(t);
\end{equation}
$T$ is a time larger than any other time in the theory and the limit $T\to +\infty$ is taken
in all the physical quantities. The quantity $\Omega\geq 0$ is called \emph{Rabi frequency}
and $\Delta\nu=\nu_0-\nu$ is called \emph{detuning.}

\paragraph{Master equation.} With these choices the Liouville operator \eqref{Lop} becomes
\begin{align}\notag
\Lcal(t)[\rho]= &-\frac \rmi 2 \left[ \nu_0\sigma_z +\Omega \sigma_{\nu t },\, \rho \right] + \gamma k_d \left(\sigma_z\rho\sigma_z -\rho\right)
\\  \notag {}&+ \gamma\left( \overline{n} + 1 \right) \left(\sigma_-\rho\sigma_+ -\frac 1 2 \left\{P_+,\rho\right\}\right)
\\ {}&+{}
\gamma\overline{n}  \left(\sigma_+\rho\sigma_- -\frac 1 2 \left\{P_-,\rho\right\}\right) .
\end{align}
The master equation \eqref{masteq} can be solved by using Bloch equations in the rotating frame \cite[Sec.\ 8.2]{BarG09}.
Indeed, we have
\begin{subequations}\label{rotating_f}
\begin{equation}
\Upsilon(t,s)[\rho]=\rme^{-\frac \rmi 2\, \nu t \sigma_z}\rme^{\check \Lcal(t-s)}\left[\rme^{\frac \rmi 2\, \nu s \sigma_z}\rho \rme^{-\frac \rmi 2\, \nu s \sigma_z}\right]\rme^{\frac \rmi 2\, \nu t \sigma_z},
\end{equation}
\begin{align}\notag
\check\Lcal[\rho]= &-\frac \rmi 2 \left[ \nu_0\sigma_z +\Omega \sigma_{x },\, \rho \right]  + \gamma k_d \left(\sigma_z\rho\sigma_z -\rho\right) \\ \notag {}&+ \gamma\left( \overline{n} + 1 \right) \left(\sigma_-\rho\sigma_+ -\frac 1 2 \left\{P_+,\rho\right\}\right)
\\ {}&+{}
\gamma\overline{n}  \left(\sigma_+\rho\sigma_- -\frac 1 2 \left\{P_-,\rho\right\}\right).
\end{align}
\end{subequations}
The system reduced state turns out to be given by
\begin{align}\notag
\eta_t&=\frac 1 2 \bigl\{\bbone +\left[x(t)+\rmi y(t) \right]\rme^{\rmi \nu t}\sigma_-
\\ {}&+ \left[x(t)-\rmi y(t) \right]\rme^{-\rmi \nu t}\sigma_+ + z(t) \sigma_z \bigr\},
\end{align}
where
\[
\vec x(t) =\rme^{-At}\vec x (0)-\gamma \,\frac {1-\rme^{-At}}A\begin{pmatrix} 0\\ 0 \\ 1 \end{pmatrix},
\]
\[
A=\begin{pmatrix}\gamma\left(\frac{1}{2}+\overline{n}+2k_\rmd \right)& \Delta \nu&0\\
-\Delta \nu &\gamma\left(\frac{1}{2}+\overline{n}+2k_\rmd \right)&\Omega\\
0&-\Omega&\gamma\left(1+2\overline{n} \right)\end{pmatrix}.
\]

\subsection{Homodyning}
The squeezing in the
fluorescence light is revealed by homodyne detection, which needs to maintain phase coherence
between the laser stimulating the atom and the laser in the detection apparatus which
determines the observables $Q(t;\vartheta,h)$.
To maintain phase coherence for a long time, the stimulating wave $f$ and the local oscillator wave $h$ must be produced by the same physical source and this means to take $h$ proportional to $f$. So, by including any phase shift in the pase $\vartheta$ already present in the definition \eqref{quadrature}, to describe homodyning we take
\begin{equation}\label{lo_homo}
h(t)=\frac{-\rmi f_2(t)}{\abs{f_2(t)}}.
\end{equation}
With the choice \eqref{quadrature} for $f$, we get $h(t)=\rme^{-\rmi \nu t}1_{[0,T]}(t)$.

The limit $T\to +\infty$ can be taken in Eqs.\ \eqref{spectrumS} and it is independent of the
atomic initial state \cite{BarGL09}. The result is \cite[Sec.\ 9.2.1]{BarG09}
\begin{align}\notag
S_\mathrm{hom}^{\mathrm{el}}(\mu;\vartheta):&=\lim_{T\to + \infty} S_T^{\mathrm{el}}(\mu;\vartheta)\\ {}&=2
\pi \gamma p \abs{\vec s(\vartheta) \cdot \vec x_\mathrm{eq}}^2  \delta(\mu),
\end{align}
\begin{align}\notag
S_\mathrm{hom}^{\mathrm{inel}}(\mu;\vartheta):&= \lim_{T\to + \infty}
S_T^{\mathrm{inel}}(\mu;\vartheta)\\ {}&=1+2p\gamma
\, \vec{s}(\vartheta)\cdot\left(\frac{A}{A^2+\mu^2}\,\vec{t}(\vartheta)\right),
\end{align}
where
\begin{equation*}
\vec{t}(\vartheta)=\begin{pmatrix} \left(1+z_\mathrm{eq}-x_\mathrm{eq}^{\;2}\right)\cos \vartheta -x_\mathrm{eq}y_\mathrm{eq} \sin \vartheta \\
\left(1+z_\mathrm{eq}-y_\mathrm{eq}^{\;2}\right)\sin \vartheta -x_\mathrm{eq}y_\mathrm{eq} \cos \vartheta
\\
- \left(1+z_\mathrm{eq}\right)\vec s(\vartheta) \cdot \vec x_\mathrm{eq}
\end{pmatrix},
\end{equation*}
\[
\vec{s}(\vartheta)=\begin{pmatrix}\cos\vartheta\\ \sin\vartheta\\0\end{pmatrix}, \qquad
\vec{x}_\mathrm{eq} = -\gamma A^{-1}\,
\begin{pmatrix}0\\0\\1\end{pmatrix}.
\]

Examples of inelastic homodyne spectra are plotted in Figure \ref{BGfig1} for $\gamma=1$, $\overline{n}=k_\rmd=0$, $p=4/5$.
The Rabi frequency $\Omega$, the detuning $\Delta\nu$ and the phase $\vartheta$ are
chosen in order to get the deepest minimum of $S^{\mathrm{inel}}_\mathrm{hom}$ in $\mu=2$. Thus, in this
case the analysis of the homodyne spectrum reveals the squeezing of the detected light. Also the
complementary spectrum is shown, in order to illustrate the role of the Heisenberg-type uncertainty relation \eqref{Heisenberg}. One could also compare the homodyne
spectrum with and without $\overline n$ and $k_\rmd$, thus verifying that the squeezing is
very sensitive to any small perturbation.
\begin{figure}[h]
\begin{center}
\includegraphics*[scale=.25]{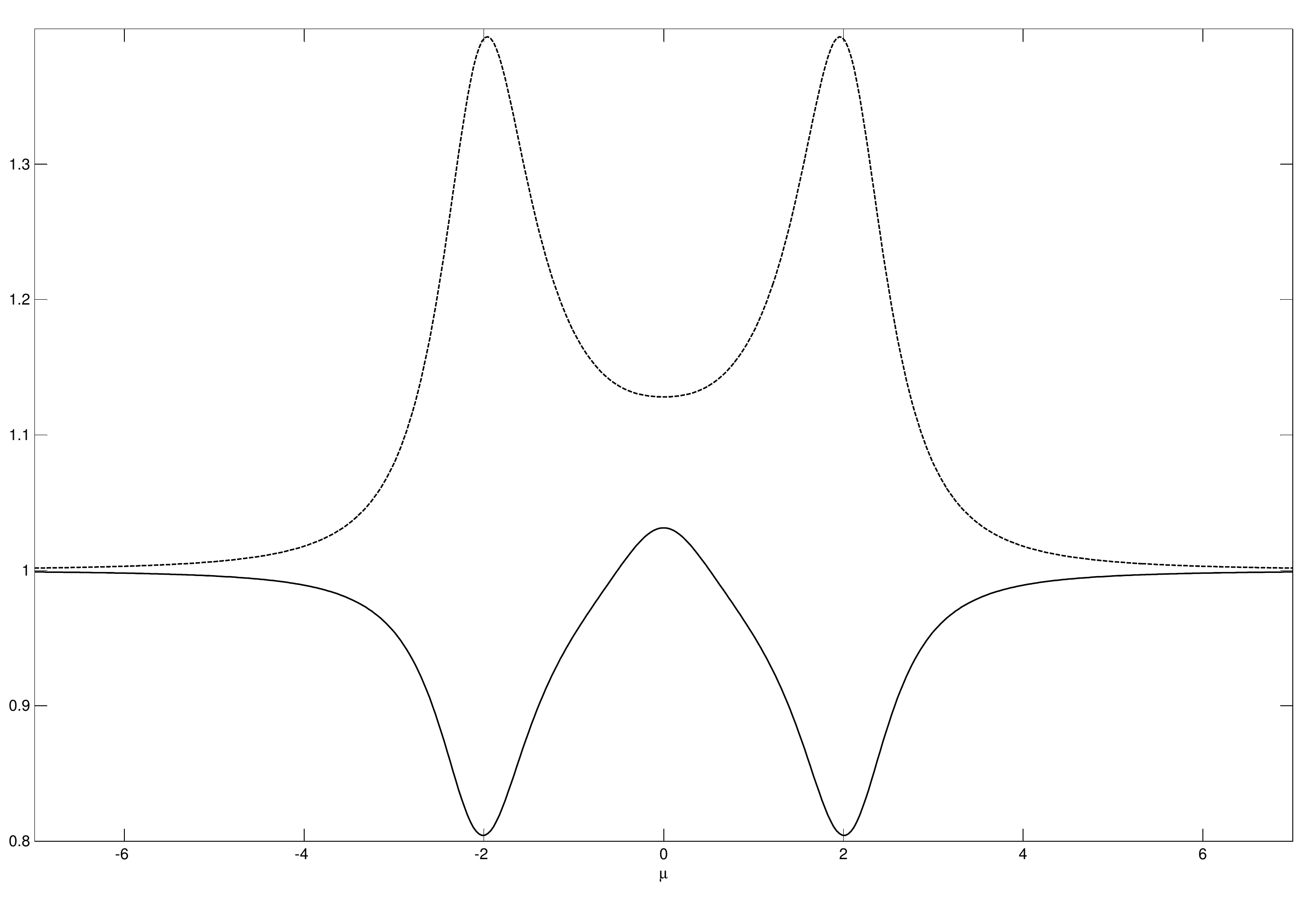} \caption{$S^{\mathrm{inel}}_{\mathrm{hom}}(\mu;\vartheta)$ \
with \ $\Delta\nu=1.4937$, \ $\Omega=1.4360$ and $\vartheta=-0.1748$ (solid line), $\vartheta=\frac \pi 2 -0.1748$ (dashed line).} \label{BGfig1}
\end{center}
\end{figure}

\subsection{Heterodyning} When the local oscillator and the stimulating wave are not produced by the same source, the phase difference cannot be maintained for a long time; in this case we have heterodyne detection. In the case of perfectly monochromatic waves, with a stimulating laser represented by \eqref{laserw}, we take as local oscillator
\begin{equation}\label{lo_het}
h(t)=\rme^{-\rmi \nu_\mathrm{lo} t}1_{[0,T]}(t), \qquad \nu_\mathrm{lo}\neq \nu.
\end{equation}

Again the limit $T\to +\infty$ can be taken in Eqs.\ \eqref{spectrumS} and it turns out to be independent of the
atomic initial state and of $\vartheta$. Let us set
\[
\mathpzc{v}:= \nu_\mathrm{lo}-\nu;
\]
then, the final result is: $S_\mathrm{het}(\mu;\nu_\mathrm{lo})=S_\mathrm{het}^{\mathrm{el}}(\mu;\nu_\mathrm{lo}) + S_\mathrm{het}^{\mathrm{inel}}(\mu;\nu_\mathrm{lo})$,
\begin{subequations}\label{hetSP}
\begin{align}\notag
S_\mathrm{het}^{\mathrm{el}}(\mu;\nu_\mathrm{lo})&=\lim_{T\to + \infty} S_T^{\mathrm{el}}(\mu;\vartheta) \\ \notag
{}&=\frac
\pi 2\,  \gamma p \left( x_\mathrm{eq}^{\,2}+ y_\mathrm{eq}^{\,2}\right)\bigl( \delta(\mu-\mathpzc{v}) + \delta(\mu+\mathpzc{v})\bigr)
\\ \notag {}&=
\frac 1 4 \bigl[S_\mathrm{hom}^{\mathrm{el}}(\mu-\mathpzc{v};0) + S_\mathrm{hom}^{\mathrm{el}}(\mu-\mathpzc{v};\pi/2) \\ {}&+ S_\mathrm{hom}^{\mathrm{el}}(\mu+\mathpzc{v};0) + S_\mathrm{hom}^{\mathrm{el}}(\mu+\mathpzc{v};\pi/2)\bigr],
\end{align}
\begin{align}\notag
S_\mathrm{het}^{\mathrm{inel}}&(\mu;\nu_\mathrm{lo})= \lim_{T\to + \infty}
S_T^{\mathrm{inel}}(\mu;\vartheta)= \gamma p\,D(\mu,\mathpzc{v}) \\ \notag {}&+
\frac 1 4 \bigl[S_\mathrm{hom}^{\mathrm{inel}}(\mu-\mathpzc{v};0) + S_\mathrm{hom}^{\mathrm{inel}}(\mu-\mathpzc{v};\pi/2)  \\ {}&+ S_\mathrm{hom}^{\mathrm{inel}}(\mu+\mathpzc{v};0)+ S_\mathrm{hom}^{\mathrm{inel}}(\mu+\mathpzc{v};\pi/2)\bigr],
\end{align}
\begin{align}\notag
&D(\mu,\mathpzc{v}):={} \\ \notag {}&=\vec s\left ({\textstyle \frac\pi 2}\right)\cdot \left( \frac{(\mu+\mathpzc{v})/2}{A^2+(\mu+\mathpzc{v})^2}-  \frac{(\mu-\mathpzc{v})/2}{A^2+(\mu-\mathpzc{v})^2} \right) \vec t(0) \\ {}&-{}\vec s (0)\cdot \left( \frac{(\mu+\mathpzc{v})/2}{A^2+(\mu+\mathpzc{v})^2}-  \frac{(\mu-\mathpzc{v})/2}{A^2+(\mu-\mathpzc{v})^2} \right) \vec t\left({\textstyle \frac\pi 2}\right).\label{inelhet}
\end{align}
\end{subequations}
Recall that $\vec s (0)=(1,0,0)$ and $\vec s (\pi/2)=(0,1,0)$.

The inelastic heterodyne spectrum \eqref{inelhet} can also be written as
\begin{equation}
S_\mathrm{het}^{\mathrm{inel}}(\mu;\nu_\mathrm{lo})=1+2\pi p \left[\Sigma_{\mathrm{inel}}(\mu+\mathpzc{v}) + \Sigma_{\mathrm{inel}}(\mu-\mathpzc{v})\right],
\end{equation}
\begin{equation}\label{fluor}
\Sigma_{\mathrm{inel}}(\mu)=\frac \gamma {4\pi}\, \RE\left( (1,\rmi, 0) \cdot \frac 1 {A+\rmi \mu}\left[\vec t (0) -\rmi \vec t(\pi/2)\right]\right).
\end{equation}

\subsubsection{Properties of the heterodyne spectrum}
By explicit computations, it is possible to prove \cite[Proposition 9.3 and Remark 9.4 in Sec.\ 9.1.2]{BarG09} that
\[
\Delta \nu=0 \quad  \Rightarrow \quad D(\mu,\mathpzc{v})=0
\]
and that
\[
\overline n =0 \ \text{ and } \ k_d=0 \quad \Rightarrow \quad D(\mu,\mathpzc{v})=0.
\]
In these cases the heterodyne spectrum reduces to a linear combination of different homodyne contributions.

\paragraph{The lower bound of the heterodyne spectrum.}
Being $S_\mathrm{het}^{\mathrm{inel}}(\mu;\nu_\mathrm{lo})$ independent of $\vartheta$, any one of the two bounds in Theorem \ref{Th1} implies
\begin{equation}
S_\mathrm{het}^{\mathrm{inel}}(\mu;\nu_\mathrm{lo})\geq 1, \qquad \forall \mu, \quad \forall \nu_\mathrm{lo}.
\end{equation}
This means that it is impossible to see squeezing by heterodyning. This is true not only in the model of this section, but in any physical set up for which the dependence on $\vartheta $ is lost.

\subsubsection{The power spectrum}
Let us consider now the heterodyne spectrum as a function of the frequency of the local oscillator in the case $\mu=0$. By particularizing the expressions \eqref{hetSP}, we get
\begin{equation*}
S_\mathrm{het}^{\mathrm{el}}(0;\nu_\mathrm{lo})=
\pi   \gamma p \left( x_\mathrm{eq}^{\,2}+ y_\mathrm{eq}^{\,2}\right)\delta(\mathpzc{v}),
\end{equation*}
\begin{align}\notag
S_\mathrm{het}^{\mathrm{inel}}(0;\nu_\mathrm{lo})&=
\frac 1 2 \bigl(S_\mathrm{hom}^{\mathrm{inel}}(\mathpzc{v};0) + S_\mathrm{hom}^{\mathrm{inel}}{\textstyle\left(\mathpzc{v};\frac \pi 2\right)}\bigr)+ \gamma p\,D(0;\mathpzc{v})
\\ {}&=
1+4\pi p \Sigma_{\mathrm{inel}}(\mathpzc{v}),
\end{align}
\begin{equation*}
D(0,\mathpzc{v})=\vec s {\textstyle\left(\frac \pi 2 \right)}\cdot \left( \frac{\mathpzc{v}}{A^2+\mathpzc{v}^2} \,\vec t(0) \right)
-\vec s (0)\cdot \left( \frac{\mathpzc{v}}{A^2+\mathpzc{v}^2}\, \vec t{\textstyle\left(\frac \pi 2\right)} \right).
\end{equation*}

It can be shown that $S_\mathrm{het}(0;\nu_\mathrm{lo})$ is proportional to the mean of the power of the heterodyne current \cite[Sec.\ 9.1.1]{BarG09}; so, as a function of $\nu_\mathrm{lo}$, it represents the \emph{power spectrum}. This interpretation can be strengthened by expressing $S_\mathrm{het}(0;\nu_\mathrm{lo})$ in terms of the fields.
Let us consider now the mode operators \eqref{a(mu)} in the case $h(t)=1$:
\[
a_T(\mu)\big|_{h=1}\equiv \frac 1 {\sqrt{T}} \int_0^T \rme^{\rmi \mu t} \, \rmd B_1(t)=: \hat a_T(\mu).
\]
These operators, together with their adjoint, satisfy the bosonic commutation relations \eqref{muCCR}. By taking into account that the $\vartheta$-dependent terms vanish in the limit $T\to \infty$, from \eqref{thetadependence} we get
\begin{equation}
S_\mathrm{het}(0;\nu_\mathrm{lo})=1 + 2\lim_{T\to +\infty}\Tr_\Gamma\left\{\Pi_f(T) \hat a_T^\dagger(\nu_\mathrm{lo})\hat a_T(\nu_\mathrm{lo})\right\}.
\end{equation}
So, the mean observed power spectrum is composed by the flat spectrum of the shot noise, conventionally set equal to 1, plus a term proportional to the mean number of photons in the temporal mode of frequency approximately equal to $\nu_\mathrm{lo}$.

\paragraph{The fluorescence spectrum} The quantity
\[
\frac{S_{\mathrm{het}}(0;\nu_{\mathrm{lo}})-1}{4\pi p}=\frac{\gamma \left( x_\mathrm{eq}^{\,2}+ y_\mathrm{eq}^{\,2}\right)}4\, \delta(\mathpzc{v})+\Sigma_{\mathrm{inel}}(\mathpzc{v})
\]
is interpreted as the \emph{fluorescence spectrum} of the atom \cite[Sec.\ 9.1.2]{BarG09}; the normalization is chosen in order to have its integral over $\mathpzc{v}$ equal to the rate of emission of photons in the equilibrium state. For $\overline{n}=k_d=0$ this quantity coincides with the original Mollow spectrum \cite[Sec.\ 9.1.2.2]{BarG09}, \cite[pp.\ 178--181]{WisM10}, \cite[p.\ 288]{GarZ00}. In Figure \ref{BGfig2} we give the plot of the inelastic part of the fluorescence spectrum in an  asymmetric case ($\overline n \neq 0$, $k_\rmd\neq 0$) in which the Mollow triplet is well visible ($\Omega$ large).
\begin{figure}[h]
\begin{center}
\includegraphics*[scale=.25]{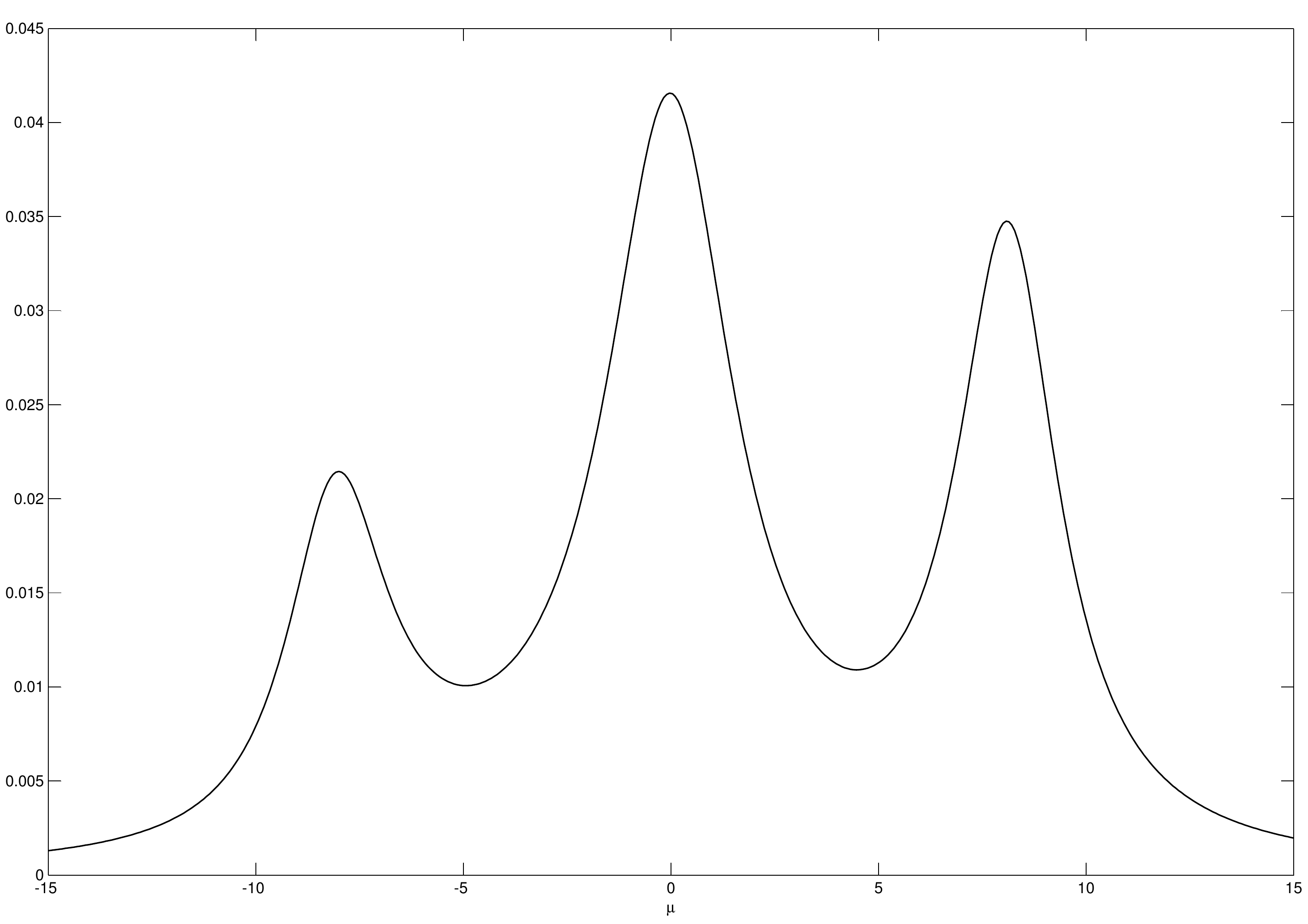} \caption{$\Sigma_{\mathrm{inel}}(\mu)$ \
with \ $\Delta\nu=1.5$, \ $\Omega=8.0$, $\overline n=0.01$, $k_\rmd=0.7$.} \label{BGfig2}
\end{center}
\end{figure}

\section{Discussion and conclusions}\label{sec:concl}

In this paper we have presented and connected two formulations of the quantum theory of measurements in continuous time and quantum filtering. The first one is based on the Hudson-Parthasarathy equation, on quantum stochastic calculus and on the observation of commuting field observables. The other one is based on the SSE, the stochastic master equation and the notion of conditional states. Then, we have studied the properties of the observed output and in particular the output spectrum and the uncertainty relations on the spectra of incompatible quadratures.

 Being the two formulations of quantum continuous measurements equivalent, the output spectra can be deduced from the classical SSE or from the quantum one, and the bounds (of Heisenberg type) on the spectra hold independently of the formulation. This point is conceptually relevant, because this equivalence and the presence of uncertainty relations in both formulations show that the ``classical'' SSE is not a semiclassical approximation to some ``quantum'' theory, but it is itself fully ``quantum''. However, let us stress that the proof of the bounds is based on the quantum field formulation and on the uncertainty relations for \emph{ incompatible quadratures} of the fields. On the other side the most complete results on the probabilistic structure of the output is obtained in the stochastic formulation.

The second part of the paper is devoted to a concrete application (a two-level atom), which allows to discuss  the differences between homodyning and heterodyning and to show how to introduce typical dissipative effects. We show also how to deduce from the general theory the spectrum of the squeezing and the power spectrum with the Mollow triplet.

Various generalizations of the theory presented in this paper are possible, first of all by introducing direct detection \cite{BarP02,Dav76,BarL85,BarB91,Car93,Car08,GarZ00,Bar06,Bar90QO,BarL00,Bar97} and Markovian feedback \cite{BarG09,WanWM01,WisMil93,WisM93,BarGL08,BarGL09,Gough12,DSHB12,Bel12}. Let us stress that closed loop control \cite{BelE08,BouH08,James08,James11,AltT12,TicNA13}, based on the observation of the system, is possible only inside a theory allowing to describe continuous monitoring of a quantum system.

The quantum trajectory approach can be generalized to introduce also feedback control with delay, coloured noises, various non-Markovian effects \cite{BarPP10,BarP10,BarDPP11,BarG12,BarPP12,GJN12}. The simplest non-Markovian contribution is to take the laser wave $f$ to be random
\cite{BarP02, Bar06}. This means to consider as initial state of the field a classical mixture of coherent states, the mean of $\varrho_\Gamma(f)$. Even the local oscillator wave $h$ can be random, which again means to take a mixture of coherent vectors as state of the local oscillator \cite{Bar06,BarPP12}. This randomness can be used to introduce more realistic models of laser light, not only perfectly coherent monochromatic waves, but also waves exhibiting some coherence time. This allows for a better analysis of the  differences between homodyning and heterodyning  \cite{BarPP12}.
Another type of generalization is to consider  cascades systems \cite[Chapt.\ 12]{GarZ00} and networks of optically active systems \cite{Gough12, Mab+12, NG12,EGJ12}. Here, the general Hudson-Parthasarathy equation and the input/output formalism of Eqs.\eqref{out1}--\eqref{Bout} are essential.

\end{document}